\theoremstyle{definition}
\theoremstyle{theorem}
\newtheorem{theorem}{Theorem}[section]
\numberwithin{equation}{section}
\newtheorem{lemma}[theorem]{Lemma}
\newtheorem{proposition}[theorem]{Proposition}
\newtheorem*{remark*}{Remark}
\newtheorem{remark}[theorem]{Remark}
\DeclareMathOperator{\tr}{Tr}
\numberwithin{equation}{section}
\begin{document}

\title[WKBJ method in a RD context]{WKBJ approximation for linearly coupled systems: asymptotics of reaction-diffusion systems}


\author{Juraj Kov{\'a}{\v c}}
\address{Dept of Mathematics, FNSPE, Czech Technical University in Prague, Czech Republic}
\email{kovacjur@fjfi.cvut.cz}

\author{V{\'a}clav Klika}
\address{Dept of Mathematics, FNSPE, Czech Technical University in Prague, Czech Republic}
\email{vaclav.klika@fjfi.cvut.cz}

\subjclass[2010]{Primary 34E20}

\keywords{WKB, singular perturbation, reaction-diffusion systems, Airy functions}

\date{\today}


\begin{abstract}
  Asymptotic analysis has become a common approach in investigations of reaction-diffusion equations and pattern formation, especially when considering generalizations to the original model, such as spatial heterogeneity, where finding an analytic solution even to the linearized equations is generally not possible. The WKBJ method, one of the more robust asymptotic approaches for investigating dissipative phenomena captured by linear equations, has recently been applied to the Turing model in a heterogeneous environment. It demonstrated the anticipated modifications to the results obtained in a homogeneous setting, such as localized patterns and local Turing conditions. In this context, we attempt a generalization of the scalar WKBJ theory to multicomponent systems. Our broader mathematical approach results in general approximation theorems for systems of ODEs. We discuss the cases of exponential and oscillatory behaviour first before treating the general case. Subsequently, we demonstrate the spectral properties utilized in the approximation theorems for a typical Turing system, hence suggesting that such an approximation is reasonable. Note that our line of approach is via showing that the solution is close (using suitable weight functions for measuring the error) to a linear combination of Airy functions.
\end{abstract}

\maketitle

\section{Introduction}

The simple observation by Turing \cite{Turing} that reaction and diffusion can account for a large variety of pattern formation phenomena has given rise to a vast amount of research in the past decades, both theoretical and experimental. The model based on the transition from homogenenity to heterogenenity due to the effects of diffusion is just as strikingly counterintuitive as it is remarkably simple. The broad applicability ranging from developmental biology and embryogenesis \cite{Maini1397} to distribution and dispersal of animal populations \cite{Murray03} has stimulated interdisciplinary efforts to further understand and develop the model.  

 Rejected by large portions of the scientific community at first \cite{Waddington}, partly due to its austerity, the Turing model has been the subject of advanced analyses and attempted generalizations seeking to enhance its complexity, relax its constraints or modify its properties, and to challenge Turing's seemingly simplistic original approach to pattern formation. All these novel concepts and approaches require mathematical tools beyond the classical scope summarized comprehensively by Cross \& Hohenberg \cite{Cross+H}. These tools may include, but are not restricted to, asymptotic analysis \cite{white2010}, weakly non-linear analysis \cite{Wollkind}, theory of partial differential equations \cite{PDEs} etc. Other approaches, such as analyzing pseudospectra and non-normality, have been shown to be of limited significance in the context of pattern formation and the Turing model \cite{non-normality}, while domain growth has been observed to affect pattern selection in almost every way possible \cite{madzvamuse2010stability,Krause-growth, van2021turing}. Note, however, that there are other means to obtain self-organization (some particularly relevant in the context of developmental biology) than via Turing's instability. For example, domain-size driven instability \cite{advection}, which coincides with Turing's diffusion-driven instability in reaction-diffusion systems with zero flux boundary conditions, disputes the necessity of short-range activation and long-range inhibition, one of the cornerstone phenomena of the classical Turing mechanism. 

Two of the aforementioned concepts of generalization will be of key importance in this paper, namely heterogeneity and asymptotic analysis. The effect of spatial heterogeneity has been a long-standing problem with well-understood implications limited to particular reaction kinetics and shadow systems \cite{ward2002dynamics} and with the potential to extend the Turing (parameter) space and incite localized patterns \cite{Page}. Recently, under the assumption of time-scale separation (with diffusion coefficients small compared to reaction terms) or small spatial variation, it has been formally shown that diffusion-driven instability conditions can be perceived locally \cite{Klika,kozak2019pattern}. Additionally, heterogeneity has been shown to result in complex and unexpected behaviour \cite{krause2018heterogeneity}. Hence, unsurprisingly, it can be concluded that relying on linear stability analysis does not capture the full complexity of pattern formation.

Motivated by the WKBJ analysis of a two-component Turing system by Krause et al.\cite{Klika}, our goal here is to justify the deployment of these tools in a reaction-diffusion (RD) setting. It should be noted that despite being far more well-established in various fields of quantum mechanics, WKBJ analysis has been applied to a RD system in an infinite 1-dimensional domain by Dewel \& Borckmans \cite{DEWEL}. However, in line with the analysis presented in \cite{Klika}, our focus is on a finite domain of given length. We first approach the issue from a broader mathematical perspective and offer a generalization of a WKBJ approximation theorem to systems of linearly coupled ordinary differential equations in the usual WKBJ setting including an infinitesimal parameter \cite{Bender}. For the sake of lucidity, we distinguish carefully between the case of exponential behaviour and the case of oscillatory modes. The classical WKBJ modes are not directly represented in the approximation. In the former case, they instead appear in the proof as weight functions with asymptotic behaviour corresponding to the approximate solution represented in terms of modified Bessel functions. This remains true for the latter case, although the use of weight functions is not necessary there. In both cases, the proof is to a large extent based on the spectral properties of the system in question. Along with adding rigour to the WKBJ analysis, we provide estimates for the error of representing the solution to the coupled problem by Airy functions, the solutions to a scalar reaction-diffusion equation with linear spatial dependence.

Next, using these results, we demonstrate that a typical reaction-diffusion system considered in the limit of large growth rates has all the spectral properties utilized in the proof of the validity of the asymptotic approximation. We then offer a discussion of the result, its implications, and possible generalizations before demonstrating the most technical steps in the proof of the approximation theorem of section \ref{system} in Appendix \ref{details}. Appendix \ref{App Turing} summarizes the basic concepts of RD equations and Turing instability, and derives some of its properties and constraints useful in the text for a homogeneous setting, commenting on their generalizations to spatially heterogeneous environments. Should the reader not be familiar with the basic concepts and terminology of RD equations, Appendix \ref{App Turing} is recommended as a starting point to this article, especially helpful in comprehending section \ref{sect_WKBJ RD}.

\section{WKBJ approximation for a system of ODEs} \label{system}	
The WKBJ method is a powerful set of tools for obtaining approximate solutions to linear ordinary differential equations. Being an asymptotic method, it usually assumes that the highest-order derivative in the equation is multiplied by a parameter, say $\varepsilon$, and offers an asymptotic expansion of the solution for the limit $\varepsilon \rightarrow 0$. In section \ref{sect_WKBJ RD} we will also assume a different limit as well as a different position of the asymptotic parameter in the equation. We will see, however, that for the Schrödinger equation, which will be relevant for our considerations, these two approaches are analogous under additional assumptions, and the transition between them is simply a matter of rescaling. Without much further ado, let us note for future reference that for the one-dimensional Schrödinger equation 
	\begin{equation} \label{SR}
	\varepsilon^2 y'' + Q(x) y = 0,
	\end{equation}
where we assume $Q(x) \neq 0$ and denote the infinitesimal parameter $\varepsilon^2$ rather than $\varepsilon$ for greater convenience, the first-order WKBJ approximation has the form
	\begin{equation}\label{1st-order WKBJ}
	y(x) \sim c_1 Q(x)^{-1/4} \exp \left( \frac{1}{\varepsilon} \int^x_a \sqrt{-Q(s)} ds \right) + c_2 Q(x)^{-1/4} \exp \left(- \frac{1}{\varepsilon} \int^x_a \sqrt{-Q(s)} ds \right).
	\end{equation}
We refer the reader to external sources such as chapter 10 in \cite{Bender} for a more extensive and detailed account of the scalar WKBJ theory.

We are now going to offer an approximation theorem for the WKBJ approximation applied to a system of ordinary differential equations. However, we will not proceed in the usual manner, first listing the assumptions and then offering the proof. Instead, we will write down the ideas in their natural order, trying to demonstrate how and where different conditions and assumptions arise, thus proving in advance a theorem that will then summarize the entire procedure. Subsequently, we will relate the key properties and assumptions of the proof to a reaction-diffusion setting in the next section.

Our focus is on a multicomponent Schrödinger equation of the form 
\begin{equation} \label{WKBepsilon}
\varepsilon^2 \mathbf{y}'' + \mathbb{Q}(x) \mathbf{y} = 0.
\end{equation}
In what follows, we will largely rely on the spectral properties of ${\mathbb{Q}}$. Therefore, we note that for $$\tilde{\mathbb{Q}}(x;\varepsilon) = \frac{1}{\varepsilon^2} \mathbb{Q}(x)$$ we have $\sigma (\tilde{\mathbb{Q}}) = \frac{1}{\varepsilon^2} \sigma (\mathbb{Q})$, with $\sigma$ denoting the spectrum of the matrix. It is thus clear that the limit $\varepsilon \rightarrow 0$ will render the spectrum of $\tilde{\mathbb{Q}}(x)$ unbounded. We shall also assume the eigenvalues $\mu_j(x)$ sufficiently smooth (as their derivatives will appear in the analysis below). Soon we will see that it is the spectral rescaling $\sigma (\tilde{\mathbb{Q}}) = \frac{1}{\varepsilon^2} \sigma (\mathbb{Q})$ that accounts for the $1/\varepsilon$-factor in the phase of \eqref{1st-order WKBJ}. Motivated by RD equations (see Appendix \ref{App Turing}), we will first assume that the spectrum $\sigma(\mathbb{Q}(x))$ of $\mathbb{Q}(x)$ is negative.\footnote{The spectral properties of RD equations in the limit of large growth rates are discussed in section \ref{sect_WKBJ RD}.} Nevertheless, Krause et al. \cite{Klika} consider the case of positive eigenvalues (for large but finite growth rates that satisfy the conditions for Turing instability), making the results of section \ref{sect_oscillatory} of relevance for their analysis. 

Using the above observations along the way, our considerations will hereinafter refer to the general problem
\begin{equation}\label{SR 2D}
  \mathbf{y}''(x) + \tilde{\mathbb{Q}}(x; \varepsilon) \mathbf{y}(x) = 0.
\end{equation}

\subsection{Exponential case}\label{sect_exponential}
First, assume that the spectrum $\sigma(\mathbb{Q}(x))$ of $\mathbb{Q}(x)$ is real and negative for every $x \in [0,L]$. The eigenvalues $\mu(x)$\footnote{To simplify the notation we choose not to index the eigenvalues. Just note that all the quantities introduced below depend on the choice of the eigenvalue $\mu(x)$, unless otherwise stated.} of $\tilde{\mathbb{Q}}(x)$ will hence satisfy 
\begin{equation*}
 \mu(x) { \rightarrow} - \infty \text{ for } {\varepsilon \rightarrow 0}, \forall x \in [0,L].
 \end{equation*}
 The procedure will be as follows: we will find the exact solutions $u,v$ for the specific case ${\mathbb{Q}}(x) = -x \mathbb{I}$, rewrite them in terms of $\mathbb{Q}$ (more precisely, in terms of the eigenvalues $\mu(x)$) rather than $x$ and find that for a general ${\mathbb{Q}}(x)$ they do not satisfy \eqref{SR 2D}, but instead obtain additional linear correction terms. We will then be able to use these terms to find an implicit integral form of the solution and apply the limit $\varepsilon \rightarrow 0$ to show that the correction of the solution vanishes in this limit.

We start by defining
\begin{equation}\label{xi}
\xi(x) \equiv \int_0^x \sqrt{-\mu(t)} dt,
\end{equation}
a quantity that is positive and becomes unbounded with $\varepsilon$ approaching $0$ for every $x>0$. Keeping these properties in mind for later use, we can now define the two functions 
\begin{equation}\label{u and v}
\begin{aligned}
u(x) & \equiv \sqrt{\frac{\xi(x)}{\xi'(x)}} K_{1/3} \left( \xi(x) \right), \\
v(x) & \equiv \sqrt{\frac{\xi(x)}{\xi'(x)}} I_{1/3} \left( \xi(x) \right),
	\end{aligned}
\end{equation}
with $K_{\alpha}(x), I_{\alpha}(x)$ being the modified Bessel functions defined as the two linearly independent solutions to $x^2 y'' + x y' - (x^2 + \alpha^2)y = 0$. As stated above, for ${\mathbb{Q}}(x) = -x \mathbb{I}$ we have two independent Airy equations and $u,v$ are their exact solutions. Even more importantly, using the recurrent relations
\begin{equation}\label{recurrent}
	\begin{aligned}
K_\alpha'(x) &= \frac{\alpha}{x}K_\alpha(x) - K_{\alpha+1}(x), \\
I_\alpha'(x) &= \frac{\alpha}{x}I_\alpha(x) + I_{\alpha+1}(x), \\
	& = I_{\alpha-1}(x) - \frac{\alpha}{x}I_\alpha(x),
	\end{aligned}
\end{equation}
along with the symmetric property $K_{-\alpha}(x) = K_\alpha(x)$ (cf. Theorems 4.15 and 4.16 in \cite{Bell1968}) and after some exhausting algebra, we arrive at the following property (see Lemma \ref{T proof} in Appendix \ref{details}) valid for both $\psi=u$ and $\psi=v$:
\begin{equation}\label{2nd der}
\psi'' = \left[ (\xi')^2 - T \right] \psi = \left[ -\mu - T \right] \psi,
\end{equation}
where
\begin{equation} \label{T}
	\begin{aligned}
T(x) &= \frac{1}{4} \left[ \frac{5}{9} \left( \frac{\xi'}{\xi} \right)^2 + 2 \frac{\xi' \xi'''}{\xi'^2} - 3 \left( \frac{ \xi'' }{\xi'} \right)^2 \right](x) = \frac{1}{4} \left[ \frac{5}{9} \left( \frac{\xi'}{\xi} \right)^2 + \frac{(\xi'^2)''}{\xi'^2} - \frac{5}{4} \frac{\left((\xi'^2)'\right)^2 }{\xi'^4} \right](x) = \\
	& = \frac{1}{4} \left[ -\frac{5}{9} \frac{\mu}{\xi^2} + \frac{\mu''}{\mu} - \frac{5}{4} \frac{\mu'^2 }{\mu^2} \right](x).
	\end{aligned}
\end{equation}
Note that the powers of $\xi$ (or $\mu$) in the numerator and in the denominator are identical for all terms. Due to the scaling of the spectrum of $\tilde{\mathbb{Q}}(x)$ we have a separation of the dependence of the eigenvalue $\mu$ on $x$ and $\varepsilon$, namely $\mu(x)=\frac{1}{\varepsilon^2} m(x)$. As a result, the derivatives of the eigenvalues $\mu$  are all of the same order in $\varepsilon$ and  the same applies to $\xi$ and its derivatives. Hence,  regardless of the choice of the eigenvalue $\mu,$ the function $T(x)$ is $\varepsilon$-independent, which also means bounded in the limit $\varepsilon \rightarrow 0$, a fact we will make good use of later on.

Observe now that $u$ and $v$ are the two linearly independent solutions to the same linear equation \eqref{2nd der}, with $K_{1/3}(x)$ regular at $x \rightarrow +\infty$ and $I_{1/3}(x)$ regular for $x \rightarrow 0$, and their Wronskian $W[u,v](x)$ can be shown to be constant from Sturm-Liouville theory. In addition, $u$ is the exponentially decaying solution while $v$ is the exponentially growing solution and $W[u,v](x)$ is equal to $1$. \footnote{For a more detailed account of the properties of the modified Bessel functions and Airy functions, we refer the reader to the aforementioned book by Bell or the well-known book of special functions by Abramowitz and Stegun, especially chapters 9.6 and 10.4.\cite{Abramowitz}}

We can now proceed by defining another function $G(x,s) = u(x) v(s) - v(x) u(s)$.\footnote{This is obviously not a proper Green's function. However, the property that we are about to show 'justifies' this notation.} This function can now be used as a part of the kernel of an integral operator to find the exact solution to \eqref{SR 2D}. More precisely, it is just a matter of direct differentiation and applying the property \eqref{2nd der} of $u$ and $v$ to show that the implicitly defined function \footnote{We omit $v$ from the non-integral part of $y$ since $\xi$ is unbounded for any $x>0$ when $\varepsilon \rightarrow 0$ and $v$, in turn, becomes unbounded for an unbounded argument.}
\begin{equation}\label{int solution}
y(x) = u(x) + \int_x^L G(x,s) T(s)y(s)ds 
\end{equation}
satisfies 
\begin{equation*}
y''(x) = \left[ -T(x) + \xi'^2(x) \right] y(x) + T(x)y(x)W[u,v](x) = \xi'^2 y(x).
\end{equation*}

In the next step, we would like to use these observations concerning the scalar WKBJ method and connect it to the vector case. To this end, we shall use the eigenvectors $\mathbf{p}$ of $\tilde{\mathbb{Q}},$ defined by $\tilde{\mathbb{Q}}(x)\mathbf{p} = \mu(x) \mathbf{p},$ to diagonalize $\tilde{\mathbb{Q}}$ in order to obtain a solution to \eqref{SR 2D}. Note that assuming $\tilde{\mathbb{Q}}$ diagonalizable on the whole interval $[0,L]$ is a rather mild restriction as almost every matrix is diagonalizable. Moreover, we believe that this technical assumption can be removed altogether.

Due to the spatial dependence of the eigenvector $\mathbf{p}(x)$ we shall proceed with greater care. 
Consider the rotated scalar solutions 
\begin{equation}\label{solution SR 2D}
\mathbf{y}(x) = y(x) \mathbf{p}(x).
\end{equation}
Requiring $\mathbf{y}(x)$ to be a solution to \eqref{SR 2D} yields
\begin{equation*}
  (y'' \mathbb{I}+y \tilde{\mathbb{Q}}) \mathbf{p} = -2 y' \mathbf{p}'-y \mathbf{p}''
\end{equation*}
or, using the above property of $y$,
\begin{equation}\label{vector eq}
0 = y (-\mu \mathbb{I}+ \tilde{\mathbb{Q}}) \mathbf{p} = -2 y' \mathbf{p}'-y \mathbf{p}''.
\end{equation}
Denoting $\overline{\mathbf{p}}$ the unit eigenvector, we substitute $\mathbf{p}(x) = \beta(x)\overline{\mathbf{p}}(x)$, where $\beta(x)$ for now represents an undetermined amplitude. Equation \eqref{vector eq} can then be rewritten as
\begin{equation*}
  (-\mu \mathbb{I}+ \tilde{\mathbb{Q}}) \overline{\mathbf{p}} =-\frac{1}{\beta y}\left[\beta'' y\overline{\mathbf{p}}+2\beta'(y\overline{\mathbf{p}})'+\beta(2 y' \overline{\mathbf{p}}'+y\overline{\mathbf{p}}'')\right].
\end{equation*}
As the matrix on the left-hand side is singular and independent of the yet undetermined amplitude function $\beta$, the Fredholm alternative imposes a solvability condition. Let us denote by $\mathbf{s}^T$ the left eigenvector \footnote{The dimensions of the left and right eigenspace corresponding to the same eigenvalue are always equal. For simplicity, here we assume these eigenspaces one-dimensional.} of $(-\mu \mathbb{I}+ \tilde{\mathbb{Q}})$ corresponding to the zero eigenvalue. The solvability condition thus becomes
\begin{equation} \label{eq:Amplitude}
  0=\beta'' y\mathbf{s}.\overline{\mathbf{p}}+2\beta' \mathbf{s}.(y\overline{\mathbf{p}})'+\beta[2 y' \mathbf{s}.\overline{\mathbf{p}}'+y\mathbf{s}.\overline{\mathbf{p}}''],
\end{equation}
which is a (scalar) second-order linear ordinary differential equation for the amplitude $\beta$. Crucially, there exists an amplitude $\beta$ that solves the differential equation \eqref{eq:Amplitude}, and hence there exists $\beta$ such that $0=2 y'\mathbf{p}'+y \mathbf{p}''$.
Therefore, the rotated scalar solutions \eqref{solution SR 2D} with this particular amplitude satisfy \eqref{SR 2D} as we have
\begin{equation*}
  \mathbf{y}'' + \tilde{\mathbb{Q}}(x) \mathbf{y} = (y'' \mathbb{I}+y \tilde{\mathbb{Q}}) \mathbf{p} + 2 y' \mathbf{p}'+y \mathbf{p}''= (y'' \mathbb{I}+y \tilde{\mathbb{Q}}) \mathbf{p}=
   \xi'^2 y(x) \mathbf{p} +  \mu(x) y(x) \mathbf{p} = 0.
\end{equation*}
Note that spatially constant eigenvectors $\mathbf{p}$ present a simple special case where the equation for the amplitude $\beta$ has a constant solution. A sufficient condition for such a situation is, for example, $\tilde{\mathbb{Q}}$ having the form $\tilde{\mathbb{Q}}=q(x) \mathbb{M}$.

Next, we would like to show that the correction to $u$ given by the integral in \eqref{int solution} becomes arbitrarily small for sufficiently small $\varepsilon$. Hence, the particular knowledge of solutions to the scalar Airy differential equation can be used to construct arbitrarily precise solutions to the general vector problem once the parameter $\varepsilon$ is sufficiently small. However, this requires some subtle handling.

Before we dig deeper into that, we take a small detour that will bring us closer to completing the proof. As things stand, the integral in \eqref{int solution} features the unknown function $y$, which is very inconvenient but can be taken care of in a rather elegant manner. Let us introduce the integral operator 
\begin{equation*}
(A f)(x) \equiv \int_x^L G(x,s) T(s)f(s)ds.
\end{equation*}
Then \eqref{int solution} can be rewritten $(\mathbb{I} - A)y = u$. If we could prove that $\Vert A \Vert < 1$, we would be able to define the inverse to the operator on the left-hand side via the Neumann series. Let us for now assume that this can be achieved for a suitably small $\varepsilon$, a property we shall show below. Then we would have $y = (\mathbb{I} - A)^{-1}u$ and the correction to $u$ in $y$ could thus be written as
\begin{equation*}
y - u = (\mathbb{I} - A)^{-1}u - u = \left[ (\mathbb{I} - A)^{-1} - \mathbb{I} \right] u = \left( \sum_{n=1}^\infty A^n \right) u,
\end{equation*}
which implies
\begin{equation*}
\Vert y-u \Vert_{L^\infty(\Omega)} \leq \sum_{n=1}^\infty \Vert A \Vert^n_\infty \Vert u \Vert_{L^\infty(\Omega)} = \frac{\Vert A \Vert_\infty}{1 - \Vert A \Vert_\infty}\Vert u \Vert_{L^\infty(\Omega)}.
\end{equation*}
In this estimate, we denoted by $\Vert \cdot \Vert_\infty $ the operator norm induced by $\Vert \cdot \Vert_{L^\infty(\Omega)} $ and set $\Omega = (0,L)$. In this way, we eliminated the unknown function $y$ from our estimate of the correction $r(x) \equiv y(x) - u(x)$. Note that both norms on the right-hand side generally depend on $\varepsilon$ (although $u$ is a bounded function on $\mathbb{R}_+$ so a global, $\varepsilon$-independent upper bound can be found for its norm). If we could prove that $\Vert A \Vert_\infty$ is not only finite but vanishes in the limit $\varepsilon \rightarrow 0$, the proof would be complete.

To show that, we restrict the discussion of the estimates below only to $s\geq x$ as follows from the definition of $A$.
As stated earlier, $u$ is the decaying and $v$ the increasing solution, both exponentially. This fact can be expressed more precisely, namely in terms of the suitable weight functions 
\begin{equation}\label{weights}
w_1(x) \equiv (-\mu(x))^{1/4}, \quad w_2(x) \equiv \exp \left( \int_0^x \sqrt{-\mu(s)}ds \right).
\end{equation}
First, observe that both $u(x)$ and $v(x)$ are bounded near $x=0$ due to the pre-factor $\left( \xi/\xi' \right)^{1/2}$ and by the asymptotic relations utilized in the proof of Lemma \ref{integral}. Combining this observation with the asymptotic relations for large arguments 9.7.1 and 9.7.2 in \cite{Abramowitz} (for $z=\xi$), we obtain for the functions $u,v$ the upper bounds
\begin{equation}\label{upper bounds}
\vert u(x) \vert \leq \frac{M}{w_1(x) w_2(x)}, \qquad \vert v(x) \vert \leq \frac{M w_2(x)}{w_1(x)},
\end{equation}
which imply $\vert u(x)v(x) \vert \leq \frac{M^2}{w_1(x)^2}= \frac{M^2}{\sqrt{-\mu(x)}}$. These estimates are global, i.e. $M$ can be chosen independent of the particular choice of the eigenvalue and of the value of $\varepsilon$. Note that the functions $\nu_1(x) \equiv \frac{1}{w_1(x) w_2(x)}$ and $\nu_2(x) \equiv \frac{w_2(x)}{w_1(x)}$ correspond to the two modes in \eqref{1st-order WKBJ}\footnote{This is the only reason we include $w_1$ in the proof. Since the eigenvalues $\mu(x)$ are assumed nonzero, both $w_1$ and $1/w_1$ are bounded. Consequently, the validity of the subsequent estimates would be preserved even without $w_1$.}, either both harmonic or one exponentially growing and the other one exponentially decaying, with the latter being the case under our assumptions.

We are now ready to estimate $G(x,s)$ within the integral of the operator $A$. First of all, note that both weight functions $w_1, w_2$ are bounded in $x$ (for a fixed $\varepsilon$) as, from our initial assumptions, $\mu<0$. Secondly, $\xi(t)$ is increasing in $t$, and hence the second term, $v(x)u(s)$, in $G(x,s)$ can be shown to be bounded. However, we cannot generally show the boundedness of the first term, $u(x)v(s)$. Therefore, we introduce one more definition, namely a change in the measure of the residue $y-u$.

Introducing the notation $\psi_{12}(x) \equiv w_1(x) w_2(x) \psi(x)$, we rewrite \eqref{int solution} as 
\begin{equation} \label{bounded}
y_{12}(x) = u_{12}(x) + \int_x^L T(s) \frac{ u_{12}(x) v_{12}(s) - v_{12}(x) u_{12}(s)}{w_2^2(s)} \frac{1}{w_1^2(s)} y_{12}(s)ds \equiv u_{12} +A_{12}y_{12},
\end{equation}
where we introduced the operator $A_{12}=w_1(x)w_2(x) A$. In analogy to the discussion of the properties of $A$ above, we aim to show that $\Vert A_{12}\Vert\rightarrow 0$ as $\varepsilon \rightarrow 0$, which would guarantee that the residual vanishes (if measured in $L^\infty(\Omega)$ with the weight $w_1(x) w_2(x)$):
\begin{equation*}
\Vert y_{12}-u_{12} \Vert_{L^\infty(\Omega)} \leq \frac{\Vert A_{12} \Vert_\infty}{1 - \Vert A_{12} \Vert_\infty}\Vert u_{12} \Vert_{L^\infty(\Omega)}\leq M \frac{\Vert A_{12} \Vert_\infty}{1 - \Vert A_{12} \Vert_\infty},
\end{equation*}
thanks to the upper bound \eqref{upper bounds} on $u(x)$.

For the second fraction in \eqref{bounded}  we have 
\begin{equation}\label{estimate1}
\left| \frac{v_{12}(x) u_{12}(s)}{w_2^2(s)} \right| \leq \left| u(x) v(x) \frac{w_1(s)w_2(x)w_1(x)}{w_2(s)} \right| \leq M^2 \left| \frac{w_1(s)w_2(x)}{w_1(x)w_2(s)} \right| \leq M^2 \left| \frac{w_1(s)}{w_1(x)} \right|,
\end{equation}
which is bounded by the boundedness of $\mu$ with respect to $x$. In addition, the ratio $\frac{w_1(s)}{w_1(x)}$ is $\varepsilon$-independent as follows from the scaling argument of the spectrum of $\tilde{Q}$. Similarly, using \eqref{upper bounds} and the fact that $\left| \frac{v(s)}{w_2(s)} \right| \leq \frac{M}{\vert w_1(s) \vert}$, for the first term in \eqref{bounded} we obtain 
\begin{equation}\label{estimate2}
\left| \frac{u_{12}(x) v_{12}(s)}{w_2^2(s)} \right| \leq M \left| w_2(x)w_1(x)u(x) \right| \leq M^2.
\end{equation}
Combining \eqref{estimate1} and \eqref{estimate2} hence yields a uniform estimate of $(u_{12}(x)v_{12}(s)-v_{12}(x)u_{12}(s))/w_2^2(s)$ in both $x$ and $\varepsilon$.

Since $L$ is finite and $1/w_1^2$ is bounded from above in both $x$ and $\varepsilon$ from its definition and the assumption $\mu<0$, all that is left to show are the properties of $T$. There are two main concerns in this regard. We have argued above that $T(x)$ is bounded with respect to $\varepsilon$ but we have yet to explore its behaviour with respect to $x$. As $\mu<0$, the only potential singularity may originate from the first term, $\left( \xi'(s)\right)^2/ (\xi(s))^2 $, in particular for $x\rightarrow 0_+$ and $s \rightarrow x_+$. As $\xi(x)=x\sqrt{-\mu(0)}+\mathcal{O}(x^2)$, we indeed have a non-integrable singularity in the term
\begin{equation*}
  \frac{\mu}{\xi^2}\sim \frac{1}{s^2}.
\end{equation*}
However, as we show in Lemma \ref{integral}, the integral $\lim_{\delta \to 0_+}\int_\delta^{\delta + \rho}G(x,s)/(\xi(s))^2 ds$ can be shown to be bounded for a finite positive $\rho$ using asymptotic relations for $u,v$ and $\xi$ around $0$ (note that both weight functions are continuous and nonzero in the neighbourhood of $0$ and thus do not need to be taken into account here).

Summarising, we have shown that with $\mu^*=\max_{[0,L]}\mu<0$ we have
\begin{equation*}
	\begin{aligned}
	\Vert A_{12}\Vert \leq &\int_0^\rho \left| T(s) \frac{ u_{12}(x) v_{12}(s) - v_{12}(x) u_{12}(s)}{w_2^2(s)} \frac{1}{w_1^2(s)} \right| ds + \\
						+ & \int_\rho^L \left| T(s) \frac{ u_{12}(x) v_{12}(s) - v_{12}(x) u_{12}(s)}{w_2^2(s)} \frac{1}{w_1^2(s)} \right| ds \\
 					  \leq & \frac{1}{\sqrt{-\mu^*}} \int_0^\rho \left| T(s) \frac{ u_{12}(x) v_{12}(s) - v_{12}(x) u_{12}(s)}{w_2^2(s)} \right|  ds + \\
 						+ & \frac{1}{\sqrt{-\mu^*}} \int_\rho^L \left| T(s)\right| \left| \frac{ u_{12}(x) v_{12}(s) - v_{12}(x) u_{12}(s)}{w_2^2(s)}\right| ds  \\
  					  \leq & \frac{1}{\sqrt{-\mu^*(\varepsilon)}} \left(K_1+K_2 \right),
	\end{aligned}
\end{equation*}
where $K_1, K_2$ are constants with respect to both $x$ and $\varepsilon$ while $-\mu^*(\varepsilon) \rightarrow \infty$ as $\varepsilon\rightarrow 0^+$. This completes the proof of the proposition that $\Vert A_{12} \Vert_\infty$ vanishes in the limit $\varepsilon \rightarrow 0$ and hence of the entire approximation. The result is thus what we expected: the smaller the parameter $\varepsilon$, the better description is provided by the asymptotic method. As we shall see in the next section, the proof was to a big extent based on the spectral properties of a Turing system. This suggests that the asymptotic approach is justified in such a setting.

Let us now sum up what we have just proved. In the following theorem, we will assume that the quantities $ w_1(x), w_2(x), u(x)$ and $\mathbf{p}$ are defined as above.

\begin{theorem}\label{theorem1}
Let $\mathbb{Q}(x)$ be a diagonalizable matrix with $\sigma(\mathbb{Q}(x)) \subset (-\infty,0)$ for all $x\in\overline{\Omega}$ with $\Omega  = (0,L)$. Let the eigenvalues $\mu(x)$ of $\mathbb{Q}(x)$ satisfy $\mu \in \mathcal{C}^2(\overline{\Omega})$. Then the scalar function $y$ given by the implicit relation
$$y(x) = u(x) + \int_x^L G(x,s) T(s)y(s)ds, $$
which defines the solution to $\varepsilon^2 \mathbf{y}'' + \mathbb{Q}(x) \mathbf{y} = 0$ as $\mathbf{y}(x) = y(x) \mathbf{p}(x), $ satisfies 
\begin{equation*}
\lim_{\varepsilon \rightarrow 0} \Vert y_{12} - u_{12} \Vert_{L^\infty(\Omega)} = 0,
\end{equation*}
where $\psi_{12}(x) \equiv w_1(x) w_2(x) \psi(x)$.

\end{theorem}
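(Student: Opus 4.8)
The proof is essentially already contained in the discussion preceding the theorem statement, so the plan is to collect the individual steps into a coherent argument and make the logical dependencies explicit. First I would invoke the scaling $\tilde{\mathbb{Q}}(x;\varepsilon) = \varepsilon^{-2}\mathbb{Q}(x)$, so that $\sigma(\tilde{\mathbb{Q}}(x)) = \varepsilon^{-2}\sigma(\mathbb{Q}(x)) \subset (-\infty,0)$ and each eigenvalue factors as $\mu(x) = \varepsilon^{-2} m(x)$ with $m \in \mathcal{C}^2(\overline\Omega)$ negative; this separation of variables is what makes $T(x)$ in \eqref{T} independent of $\varepsilon$ (every term has matching powers of $\mu$, resp. $\xi$, in numerator and denominator) and hence bounded on $\overline\Omega$ uniformly in $\varepsilon$. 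Next I would recall from Lemma \ref{T proof} that the functions $u,v$ of \eqref{u and v} satisfy the corrected scalar equation \eqref{2nd der}, that their Wronskian is $1$, and that by direct differentiation the implicitly defined $y$ in \eqref{int solution} solves $y'' = \xi'^2 y = -\mu y$. The diagonalization step then shows that $\mathbf{y} = y\mathbf{p}$ with $\mathbf{p}$ chosen via the solvability condition \eqref{eq:Amplitude} for its amplitude $\beta$ solves \eqref{SR 2D}, equivalently \eqref{WKBepsilon}; it remains purely to control the residue $y - u$.

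The core of the argument is the operator-norm estimate. I would define $A$ and $A_{12} = w_1 w_2 A$ as in the text, rewrite \eqref{int solution} in the weighted form \eqref{bounded}, and aim to prove $\Vert A_{12}\Vert_\infty \to 0$ as $\varepsilon \to 0$, from which the stated conclusion follows via the Neumann series bound $\Vert y_{12} - u_{12}\Vert_{L^\infty(\Omega)} \le M\,\Vert A_{12}\Vert_\infty / (1 - \Vert A_{12}\Vert_\infty)$, legitimate once $\Vert A_{12}\Vert_\infty < 1$. To bound $\Vert A_{12}\Vert_\infty$ I would split the $s$-integral at a fixed small $\rho > 0$. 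On $[\rho, L]$ the integrand is controlled by the uniform bounds \eqref{estimate1}--\eqref{estimate2} on $(u_{12}(x)v_{12}(s) - v_{12}(x)u_{12}(s))/w_2^2(s)$ coming from the modified-Bessel asymptotics \eqref{upper bounds}, together with the boundedness of $T$ and of $1/w_1^2$; this yields a constant $K_2$ independent of $x$ and $\varepsilon$. On $[0,\rho]$ the only obstruction is the non-integrable singularity $\mu/\xi^2 \sim s^{-2}$ in the first term of $T$, which is precisely where Lemma \ref{integral} is needed: it gives boundedness of $\lim_{\delta\to 0_+}\int_\delta^{\delta+\rho} G(x,s)/\xi(s)^2\,ds$ via the small-argument asymptotics of $u,v,\xi$, producing a constant $K_1$. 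Altogether $\Vert A_{12}\Vert_\infty \le (K_1 + K_2)/\sqrt{-\mu^*(\varepsilon)}$ with $-\mu^*(\varepsilon) = -\varepsilon^{-2}\max_{\overline\Omega} m \to +\infty$, so the bound tends to zero and in particular drops below $1$ for small $\varepsilon$, closing the argument.

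The main obstacle, and the one step that genuinely requires the auxiliary lemmas rather than soft reasoning, is establishing that the weighted combination $(u_{12}(x)v_{12}(s) - v_{12}(x)u_{12}(s))/w_2^2(s)$ is bounded uniformly in both $x$ and $\varepsilon$ near $s = 0$ \emph{and} that the $s^{-2}$ singularity of $T$ there is integrable against $G$. The subtlety is that while $w_1, w_2$ are bounded for each fixed $\varepsilon$, one must track the $\varepsilon$-dependence carefully: the ratio $w_1(s)/w_1(x)$ appearing in \eqref{estimate1} is $\varepsilon$-independent thanks to the spectral scaling, and the growth of $v$ is exactly compensated by $w_2$ in the combination that enters $A_{12}$ — this compensation is the whole point of passing to the weighted problem \eqref{bounded} in the first place. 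Once these uniform bounds and the Lemma \ref{integral} estimate are in hand, the remaining steps are routine: the splitting of the integral, the elementary bound on $1/w_1^2 \le 1/\sqrt{-\mu^*}$, and the Neumann-series conclusion all follow mechanically.
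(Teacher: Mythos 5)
Your proposal follows the paper's own argument essentially step for step: the spectral scaling $\mu(x)=\varepsilon^{-2}m(x)$ making $T$ $\varepsilon$-independent, the weighted operator $A_{12}=w_1w_2A$ with the Neumann-series bound, the splitting of the integral at $\rho$ with Lemma \ref{integral} handling the $s^{-2}$ singularity, and the final estimate $\Vert A_{12}\Vert_\infty \le (K_1+K_2)/\sqrt{-\mu^*(\varepsilon)}\to 0$. It is correct and matches the paper's proof, including the identification of the genuinely delicate step (uniform control of the weighted combination of $u$ and $v$ near $s=0$).
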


We have thus proved a WKBJ approximation theorem in one spatial variable $x$ for a system of general dimension. The theorem assumes the case of the two modes in the approximation \eqref{1st-order WKBJ} being exponential. Since we were looking for a regular solution and noted that $v$ would become unbounded in the limit $\varepsilon \rightarrow 0$, we chose the decaying mode and approximated the solution with $u$.

In short, one can form an arbitrarily close solution to the problem of interest from two independent Airy functions -- solutions to a very particular problem with linear spatial dependence in reaction kinetics.

\subsection{Oscillatory case}\label{sect_oscillatory}

The natural question is whether this result applies to the case of oscillatory behaviour as well. In fact, we could have answered both these questions at the same time using a more complicated notation. However, for the sake of clarity, we have left the case of oscillatory modes untouched so far. Let us therefore analyze the case of positive eigenvalues. First of all, let us note that the case $\mu(x_0)=0$\footnote{We again refrain from indexing the eigenvalues of $\tilde{\mathbb{Q}}$.} presents additional difficulty in terms of regularity of the solution as the pre-factor $\sqrt{\xi / \xi'}$ is then ill-defined at $x_0$. On the other hand, the corresponding eigenvector $\mathbf{p}(x_0)$ of the matrix of $\mathbb{Q}(x_0)$ (with zero eigenvalue) will exist so that the solution \eqref{solution SR 2D} could still be defined if the regularity issues could be resolved. This is indeed possible by means of additional assumptions on $\xi$, and hence, on $\mathbb{Q}$. The necessary condition is clearly $\xi(x_0) = 0$. Further assumptions could guarantee that $\xi$ will vanish sufficiently quickly at $x_0$. Hence, it is possible to define a regular solution even if we admit eigenvalues crossing $0$. However, this is beyond our interest and the scope of this article so we will not go into much detail here. 

Let us therefore assume $\mu > 0$ throughout $[0,L]$. Then the argument of $K_{1/3}, I_{1/3}$ becomes purely imaginary. As we observe in remark \ref{imaginary xi} in Appendix \ref{details}, using a suitable phase constant, the modified Bessel functions $K_\alpha, I_\alpha$ can still be defined to be real-valued and regular for a purely imaginary argument, while also preserving other properties, such as the recurrent relations \eqref{recurrent}. In fact, the transition to a purely imaginary argument corresponds to replacing the modified Bessel functions with solutions to the ordinary Bessel equation, as the modified equation is obtained from the original Bessel equation using the transformation $\bar{x} = i x$. Let us therefore define $\bar{\xi}(x) = \int_0^x \sqrt{\mu(t)} dt \in \mathbb{R}$ and
\begin{equation}\label{u,v bar}
\begin{aligned}
\bar{u}(x) & \equiv \sqrt{\frac{\bar{\xi}(x)}{\bar{\xi}'(x)}} K_{1/3}(-i \bar{\xi}(x)), \\
\bar{v}(x) & \equiv \sqrt{\frac{\bar{\xi}(x)}{\bar{\xi}'(x)}} I_{1/3}(-i \bar{\xi}(x)).
\end{aligned}
\end{equation}
As we argue in the aforementioned remark \ref{imaginary xi}, these functions satisfy a relation analogous to \eqref{2nd der}, namely
\begin{equation*}\label{2nd der bars}
\psi'' = \left[ - (\bar{\xi}')^2 - T \right] \psi = \left[ -\mu - T \right] \psi,
\end{equation*}
and hence define solutions to equation \eqref{SR 2D} in the case of a positive spectrum of $Q$ in the same manner as $u$ and $v$ do for a negative spectrum, namely by
\begin{equation*}
\bar{\mathbf{y}}(x) = \mathbf{p}(x) \bar{y}(x),
\end{equation*}
with $\bar{y}$ now given as
\begin{equation*}
\bar{y}(x) = a \bar{u}(x) + b \bar{v}(x) + \int_x^L \bar{G}(x,s) T(s)y(s)ds.
\end{equation*}
It can be shown in the same manner as before that this function satisfies 
$$\bar{y}''(x) = -(\bar{\xi}')^2(x) \bar{y}(x) = -\mu(x) \bar{y}(x).$$ 
Therefore, to complete the proof of the approximation theorem for oscillatory modes, we again need to prove that the correction $\bar{y}(x)- a \bar{u}(x) - b \bar{v}(x) $ vanishes in the limit $\varepsilon \rightarrow 0$. Reusing the Neumann series argument, we now arrive at
$$\Vert \bar{y}- a \bar{u} - b \bar{v} \Vert_{L^\infty(\Omega)} \leq \sum_{n=1}^\infty \Vert \bar{A} \Vert^n_\infty \Vert a \bar{u} + b \bar{v} \Vert_{L^\infty(\Omega)} = \frac{\Vert \bar{A} \Vert_\infty}{1 - \Vert \bar{A} \Vert_\infty}  \Vert a \bar{u} + b \bar{v} \Vert_{L^\infty(\Omega)},$$
with
$$\bar{A} f(x) = \int_x^L \bar{G}(x,s) T(s)f(s)ds$$
and $\bar{G}(x,s) = \bar{u}(x)\bar{v}(s) - \bar{u}(s)\bar{v}(x)$. However, the remaining argument that \\ \mbox{ $\lim_{\varepsilon \rightarrow 0} \Vert \bar{A} \Vert_\infty = 0$} simplifies substantially compared to the case of the exponential approximations $u,v$, since now both of the functions $\bar{u}, \bar{v}$ are bounded with respect to $x$. As the exponential behaviour of $u,v$ is now replaced by oscillatory behaviour, estimates of the same form as \eqref{upper bounds} can be obtained for $\bar{u}, \bar{v}$ in terms of $\bar{w}_1(x) = w_1(x) = \mu(x)^{1/4}, \bar{w}_2(x) = 1$. Note that this also means that $\bar{u}, \bar{v}$ are both $\mathcal{O}(\varepsilon^{1/2}), \varepsilon \rightarrow 0$, as follows from the fact that $\mu \sim 1/\varepsilon^2, \varepsilon \rightarrow 0$. Since $T$ is again $\varepsilon$-independent (as the powers of $\varepsilon$ in the nominator and the denominator cancel for each term), this proves that $\Vert \bar{A} \Vert_\infty$ vanishes in the limit $\varepsilon \rightarrow 0$, if the integral can be shown to be bounded with respect to $x$ for a fixed value of $\varepsilon$. However, the asymptotic behaviour of $\bar{u}, \bar{v}$ is the same as that of $u,v$ as $x$ approaches $0$, as can be seen from the relations \footnote{For integer values of $\alpha$ we take the limit.}
\begin{equation}\label{I alpha}
\begin{aligned}
I_\alpha(z) & = (-i)^\alpha J_\alpha (iz), \quad -\pi < \arg z \leq  \frac{\pi}{2},\\
K_\alpha(z)& = \frac{\pi}{2} \frac{I_{-\alpha}(z) - I_\alpha (z)}{\sin (\alpha \pi)}
\end{aligned}
\end{equation}
for $z = -i \bar{\xi}$, with $J_\alpha$ denoting the Bessel function of the first kind. All of these functions are real-valued for both a real and a purely imaginary argument by the choice of the constant $(-i)^\alpha.$ Since we have $J_\alpha(z) \sim z^\alpha$ for $z \rightarrow 0$ as long as $\alpha$ is not a negative integer, lemma \ref{integral} remains true for $\bar{u}, \bar{v},$ just as it was for $u,v$. It follows that the integral is bounded in $x$ on $[0,L]$ and that
$$ \lim_{\varepsilon \rightarrow 0} \Vert \bar{A} \Vert_\infty = 0.$$
This completes the proof of the following variation of theorem \ref{theorem1} for oscillatory approximations:

\begin{theorem}\label{theorem2}
Let $\mathbb{Q}(x)$ be a diagonalizable matrix with $\sigma(\mathbb{Q}(x)) \subset (0, +\infty)$ for all $x\in\overline{\Omega}$ with $\Omega  = (0,L)$. Let the eigenvalues $\mu(x)$ of $\mathbb{Q}(x)$ satisfy $\mu \in \mathcal{C}^2(\overline{\Omega})$.  Then the scalar function $\bar{y}$ given by the implicit relation
$$\bar{y}(x) = a \bar{u}(x) + b \bar{v}(x) + \int_x^L \bar{G}(x,s) T(s)\bar{y}(s)ds, $$
with $a,b$ determined from initial or boundary conditions, which defines the solution to $\varepsilon^2 \mathbf{y}'' + {Q}(x) \mathbf{y} = 0$ as $\mathbf{y}(x) = \mathbf{p}(x) y(x)$, satisfies 
\begin{equation*}
\lim_{\varepsilon \rightarrow 0} \Vert {y} - a \bar{u} - b \bar{v} \Vert_{L^\infty(\Omega)} = 0.
\end{equation*}

\end{theorem}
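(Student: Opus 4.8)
The plan is to run the same three-stage argument that established Theorem~\ref{theorem1}, only replacing the modified Bessel functions of real argument by the imaginary-argument (i.e.\ ordinary Bessel) building blocks $\bar u,\bar v$ of \eqref{u,v bar}, and then to exploit the fact that in the oscillatory regime these building blocks are \emph{bounded} in $x$, which makes the weight-function bookkeeping of the exponential case unnecessary.

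First I would reduce the vector problem to a scalar one exactly as before. Diagonalising $\tilde{\mathbb{Q}}(x)=\varepsilon^{-2}\mathbb{Q}(x)$ and writing $\mathbf{y}(x)=\bar y(x)\mathbf{p}(x)$ with $\mathbf{p}(x)$ a suitably normalised eigenvector for the (positive) eigenvalue $\mu(x)$, the requirement that $\mathbf{y}$ solve \eqref{SR 2D} collapses — via the Fredholm solvability condition \eqref{eq:Amplitude} for the singular matrix $-\mu\mathbb{I}+\tilde{\mathbb{Q}}$, which is a scalar ODE for the amplitude $\beta$ in $\mathbf{p}=\beta\overline{\mathbf{p}}$ — to the scalar equation $\bar y''=-\mu\bar y$; this stage is insensitive to the sign of the spectrum. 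Next, with $\bar\xi(x)=\int_0^x\sqrt{\mu(t)}\,dt$, Remark~\ref{imaginary xi} guarantees that $\bar u,\bar v$ are real-valued, regular away from $x=0$, linearly independent with constant Wronskian, and satisfy $\psi''=[-\mu-T]\psi$ with the same $T$ as in \eqref{T} (Lemma~\ref{T proof}), which is $\varepsilon$-independent since the powers of $\mu$ in numerator and denominator of each term cancel. Setting $\bar G(x,s)=\bar u(x)\bar v(s)-\bar u(s)\bar v(x)$ and $\bar A f(x)=\int_x^L\bar G(x,s)T(s)f(s)\,ds$, a direct differentiation using $W[\bar u,\bar v]=\mathrm{const}$ shows that the implicitly defined $\bar y=a\bar u+b\bar v+\bar A\bar y$ indeed obeys $\bar y''=-\mu\bar y$, with $a,b$ fixed by the boundary data; hence $\mathbf{y}=\mathbf{p}\,\bar y$ solves \eqref{SR 2D} exactly and it remains only to control the correction $\bar y-a\bar u-b\bar v$.

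For that I would invoke the Neumann series: once $\Vert\bar A\Vert_\infty<1$ for small $\varepsilon$, one gets $\bar y-a\bar u-b\bar v=\big(\sum_{n\ge1}\bar A^n\big)(a\bar u+b\bar v)$ and therefore $\Vert\bar y-a\bar u-b\bar v\Vert_{L^\infty(\Omega)}\le\frac{\Vert\bar A\Vert_\infty}{1-\Vert\bar A\Vert_\infty}\Vert a\bar u+b\bar v\Vert_{L^\infty(\Omega)}$, so everything reduces to $\Vert\bar A\Vert_\infty\to0$. Here the oscillatory case is strictly easier than the exponential one: applying \eqref{I alpha} to $z=-i\bar\xi$ yields bounds of the shape \eqref{upper bounds} but with the trivial second weight $\bar w_2\equiv1$, i.e.\ $|\bar u(x)|,|\bar v(x)|\le M\,\mu(x)^{-1/4}$ uniformly; since $\mu\sim\varepsilon^{-2}$ this makes $\bar u,\bar v=\mathcal{O}(\varepsilon^{1/2})$ and $\bar G(x,s)=\mathcal{O}(\varepsilon)$ uniformly away from $s=0$. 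Combined with the boundedness of $T$ in both $x$ and $\varepsilon$ and the finiteness of $L$, this already gives $\Vert\bar A\Vert_\infty=\mathcal{O}(\varepsilon)$, provided the $s$-integral is bounded in $x$ at fixed $\varepsilon$.

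The hard part, and essentially the only nontrivial point left, is the endpoint $x=0$, where $\bar\xi(0)=0$ makes the prefactor $\sqrt{\bar\xi/\bar\xi'}$ and — more seriously — the term $\mu/\bar\xi^2\sim s^{-2}$ in $T$ singular, so that the integrand of $\bar A$ is a priori not integrable near $s=0$. I would handle this exactly as in Lemma~\ref{integral}, showing that $\lim_{\delta\to0_+}\int_\delta^{\delta+\rho}\bar G(x,s)/\bar\xi(s)^2\,ds$ stays bounded (uniformly in $x$, and in $\varepsilon$ once the spectral scaling is factored out). The point is that the small-argument asymptotics of $\bar u,\bar v$ coincide with those of $u,v$: by \eqref{I alpha} together with $J_\alpha(z)\sim z^\alpha$ as $z\to0$ (and $\alpha=1/3$ is not a negative integer), the cancellations in $\bar G$ near the diagonal $s\to x_+$ and near $x\to0_+$ are the same ones used in Lemma~\ref{integral}, so that lemma transfers essentially verbatim. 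Splitting $\Vert\bar A\Vert_\infty\le\int_0^\rho+\int_\rho^L$ as in the exponential case and combining the bounded near-zero contribution with the $\mathcal{O}(\varepsilon)$ estimate on $[\rho,L]$ then yields $\Vert\bar A\Vert_\infty\to0$, and with it the theorem.
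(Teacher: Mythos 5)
Your proposal is correct and follows essentially the same route as the paper: the same reduction to the scalar equation via the eigenvector/Fredholm argument, the same building blocks $\bar u,\bar v$ with the relation $\psi''=[-\mu-T]\psi$ from Remark~\ref{imaginary xi}, the same Neumann-series reduction to $\Vert\bar A\Vert_\infty\to0$, the same $\mathcal{O}(\varepsilon^{1/2})$ bound coming from $\bar w_2\equiv1$ and $\mu\sim\varepsilon^{-2}$, and the same transfer of Lemma~\ref{integral} via $J_\alpha(z)\sim z^\alpha$ to control the endpoint singularity. No substantive differences from the paper's argument.
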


It should be noted that it is not necessary for the spectrum of $\mathbb{Q}(x)$ to lie entirely on the negative/positive semi-axis. Indeed, combining theorems \ref{theorem1} and \ref{theorem2} and unifying the definitions of $\xi$ and $\bar{\xi}$, $u,v$ and $\bar{u}, \bar{v}$ etc. based on the sign of the eigenvalue, we could now admit $\sigma(\mathbb{Q}(x)) \subset \mathbb{R}$, only excluding the pathological case $\sigma(\mathbb{Q}) = \lbrace 0 \rbrace$. Any nonzero eigenvalue $\mu$ of $\mathbb{Q}$ clearly defines a solution of the form \eqref{solution SR 2D} with $y$ given by \eqref{int solution}. Moreover, a (constant) eigenvalue $\mu = 0$ defines a solution to \eqref{SR 2D} of the form $\mathbf{y}(x) = \mathbf{p}_0(x)(c_1 x +c_2),$ with $\mathbf{p}_0(x)$ denoting the corresponding eigenvector. As noted above, the case of an eigenvalue $\mu(x)$ attaining both zero and nonzero values requires additional regularity analysis (and assumptions) that surpasses the scope and the goals of this text. It is clear, though, that for a matrix with both positive and negative eigenvalues, the WKBJ approximation will consist of both exponential and oscillatory solutions.

\subsection{Complex spectrum}
Having established the notation of sections \ref{sect_exponential} and \ref{sect_oscillatory}, the transition to the case of a general complex spectrum $\sigma \left( \mathbb{Q} \right)$ of $\mathbb{Q}(x)$ is quite straightforward. Once we have established the notions introduced in the previous two sections for a complex spectrum, the approximation theorem will immediately follow from the properties proved therein.

Let us therefore start by defining the square root of a complex number $z = |z| e^{i \varphi}, \varphi \in (- \pi, \pi)$\footnote{We deliberately neglect the left semi-axis of the real axis for reasons clarified later in the text. } by
\begin{equation}\label{complex sqrt}
z^{1/2} \equiv \sqrt{|z|} e^{i \varphi/2}. 
\end{equation}
It follows that 
$$ \left| \arg \left( z^{1/2} \right) \right| < \frac{\pi}{2}, \forall z \in \mathbb{C}\setminus \mathbb{R_-} \Leftrightarrow \Re z^{1/2} > 0, \forall z \in \mathbb{C}\setminus \mathbb{R_-},$$
an observation that we will find very useful soon. Note as well that this definition coincides with the classical square root for $\mu >0$ (even for any $\mu \in \mathbb{R}$ if we define the polar angle on the negative semi-axis to be $\pi$). Throughout this section, it is going to be beneficial to work with the polar coordinate system in the complex plane. Let us therefore assume $\mu(x)$ to be an eigenvalue of $\mathbb{Q}(x)$ and write its negatively taken value in the polar form as $-\mu(x) = |-\mu(x)| e^{i \varphi(x)}=|\mu(x)| e^{i \varphi(x)}.$ Proceeding in a manner analogous to the previous sections, we define
\begin{equation}\label{xi complex} 
\xi(x) \equiv \int_0^x (-\mu)^{1/2}(t)dt = \int_0^x \Re \left( (-\mu)^{1/2}(t) \right) dt  + i \int_0^x \Im \left( (-\mu)^{1/2}(t) \right) dt. 
\end{equation}

\begin{lemma}
Let the eigenvalue $\mu(x)$ of $\mathbb{Q}(x)$ satisfy $\mu(x) \in \mathbb{C}\setminus \mathbb{R_+}, \forall x \in [0,L]. $  Then the function $\xi(x) = \int_0^x (-\mu)^{1/2}(t)dt$ satisfies $\Re \left( \xi(x) \right) > 0, \left( \xi'(x) \right)^2 = - \mu(x).$
\end{lemma}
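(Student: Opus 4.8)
The plan is to treat the two conclusions separately; each is a short consequence of the definition \eqref{complex sqrt} of the chosen branch of the complex square root together with the hypothesis on $\mu$.

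First I would record that, since $\mu \in \mathcal{C}^2(\overline{\Omega})$ and the hypothesis $\mu(t)\notin\mathbb{R}_+$ is exactly the assertion that $-\mu(t)$ never meets the branch cut $\mathbb{R}_-$, the composition $t \mapsto (-\mu)^{1/2}(t)$ is continuous (in fact $\mathcal{C}^2$) on $[0,L]$, so the integral defining $\xi$ makes sense and the fundamental theorem of calculus gives $\xi'(x) = (-\mu)^{1/2}(x)$. Writing $-\mu(x) = |\mu(x)| e^{i\varphi(x)}$ with $\varphi(x)\in(-\pi,\pi)$, definition \eqref{complex sqrt} yields $\left(\xi'(x)\right)^2 = \left((-\mu)^{1/2}(x)\right)^2 = |\mu(x)| e^{i\varphi(x)} = -\mu(x)$, which settles the second assertion.

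For $\Re(\xi(x))>0$ I would use that the real part commutes with the Riemann integral, so — as already displayed in \eqref{xi complex} — $\Re(\xi(x)) = \int_0^x \Re\left((-\mu)^{1/2}(t)\right) dt$. The equivalence displayed right after \eqref{complex sqrt} states that $\Re\left(z^{1/2}\right)>0$ for every $z\in\mathbb{C}\setminus\mathbb{R}_-$; applying it with $z=-\mu(t)$, which lies in $\mathbb{C}\setminus\mathbb{R}_-$ precisely because $\mu(t)\notin\mathbb{R}_+$, shows the integrand is continuous and strictly positive on $[0,L]$, whence $\Re(\xi(x))>0$ for every $x\in(0,L]$. (At $x=0$ one has $\xi(0)=0$, so the strict inequality must be read for $x>0$, which is all that is used in the sequel.)

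There is essentially no hard step here: the only point deserving care is the bookkeeping that $\mu(x)\in\mathbb{C}\setminus\mathbb{R}_+$ keeps $-\mu(x)$ off the cut $\mathbb{R}_-$ — this is exactly why the left real semi-axis was excluded in the footnote to \eqref{complex sqrt} — so that $(-\mu)^{1/2}$ is a genuine, continuous, single-valued function along the whole path of integration; once this is noted, both claims are immediate from the definition of the branch and from continuity of $\mu$.
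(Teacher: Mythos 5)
Your proposal is correct and follows essentially the same route as the paper's own proof: both establish $\left(\xi'\right)^2=-\mu$ by squaring the chosen branch of the square root (your polar-exponential computation is De Moivre's formula in the paper's argument), and both obtain $\Re(\xi)>0$ by integrating the pointwise positivity of $\Re\left((-\mu)^{1/2}\right)$ off the cut. Your added caveat that the strict inequality only holds for $x>0$ (since $\xi(0)=0$) is a fair, minor precision that the paper leaves implicit.
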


\begin{proof}
The fact that $\Re(\xi(x)) > 0$ follows immediately from the observation that $\Re( (-\mu)^{1/2}(t) ) > 0, \forall t \in [0,L].$ Furthermore, using the above (polar) form of $-\mu$ and the definition of $z^{1/2},$ we can write 
\begin{equation*}\label{mu^(1/2)}
(-\mu)^{1/2}(x) = \sqrt{|\mu(x)|} \left[ \cos \left( \frac{\varphi(x)}{2} \right) + i \sin \left( \frac{\varphi(x)}{2} \right) \right]. 
\end{equation*}
Differentiating equation \eqref{xi complex} we then have 
\begin{equation*}
	\begin{aligned}
\left( \xi'(x) \right)^2 & = \left[ \Re \left( (-\mu)^{1/2}(x) \right)  + i \Im \left( (-\mu)^{1/2}(x) \right) \right]^2 = \left[ \sqrt{|\mu(x)|} \cos \left( \frac{\varphi(x)}{2} \right) +  i \sqrt{|\mu(x)|} \sin \left( \frac{\varphi(x)}{2} \right) \right]^2 \\
 & = |\mu(x)| \left( \cos  \varphi(x) + i \sin \varphi(x) \right) = - \mu(x)
	\end{aligned}
\end{equation*}
by De Moivre's formula.
\end{proof}
This property allows us again to proceed as before, defining
\begin{equation}\label{u,v complex}
\begin{aligned}
u(x) & \equiv \left( \frac{\xi(x)}{\xi'(x)} \right)^{1/2} K_{1/3} \left( \xi(x) \right), \\
v(x) & \equiv \left( \frac{\xi(x)}{\xi'(x)} \right)^{1/2} I_{1/3}( \xi(x)).
\end{aligned}
\end{equation}
Note that this definition not only (formally) coincides with the definition of $u,v$ given in equation \eqref{u and v}, it also (practically) coincides with the definition of $\bar{u}, \bar{v}$ given by \eqref{u,v bar}. This follows from
\begin{equation}\label{conjugate}
I_\alpha \left( \bar{z} \right) = \overline{ I_\alpha (z)}, \quad K_\alpha \left( \bar{z} \right) = \overline{ K_\alpha (z)},
\end{equation}
and the fact that there we defined $\xi$ without the sign in the square root to make it real and rather opted to work with imaginary units in the argument of $K_\alpha$ and $I_\alpha$. Instead, we could have defined $\xi$ to be purely imaginary (using the sign in the square root) and use definitions of $u$ and $v$ identical to \eqref{u and v} and \eqref{u,v complex}, as both $K_\alpha$ and $I_\alpha$ are real-valued on the real as well as on the imaginary axis (for all positive values of $\alpha$). \cite{Abramowitz} 

It is a straightforward task to show that the complex square root defined in \eqref{complex sqrt} has properties analogous to the real square root, e.g. 
$$ \left( z^{1/2}(x) \right)' = \frac{1}{2 z^{1/2}(x)} z'(x).$$ 
Hence, the properties of the derivatives of $u$ and $v$, including the recurrent relations \eqref{recurrent}, remain identical to those presented in the previous section. There are, however, issues that need to be addressed when considering complex eigenvalues $\mu$. First of all, it is the fact that now the functions $u$ and $v$ are not necessarily real. This is not a substantial problem if $\mathbb{Q}$ is a complex matrix. On the other hand, if $\mathbb{Q}$ is real, then for any eigenvalue $\mu$, its complex conjugate $\overline{\mu}$ is also an eigenvalue of $\mathbb{Q}$. It is thus useful to observe that if $\mu$ defines $\xi$ by equation \eqref{xi complex} and $u,v$ by \eqref{u,v complex}, its complex conjugate  $\bar{\mu}$ generates the complex conjugate of $\xi$, which, in turn, generates the complex conjugate of $u,v$ by the properties \eqref{conjugate} and by the observation $(\bar{z})^{1/2} = \overline{z^{1/2}}.$ A choice of real-valued (combinations of) solutions is thus always possible for real-valued matrices. Note that this property - that the conjugate eigenvalue generates the conjugate of the function - also applies to $G(x,s)$ and $T(x)$ as given in section \ref{sect_exponential} by the properties of complex conjugation with respect to multiplication and reciprocal. 

The ability to obtain real-valued solutions to a real-valued problem allows us to make good sense of the scalar solutions of the form \eqref{int solution} and again define solutions to the vectorial problem by \eqref{solution SR 2D}. We can then proceed as in the previous two sections and again arrive at the necessity to show that the correction to the solution given by the integral in \eqref{int solution} vanishes in the limit $\varepsilon \rightarrow 0.$ This can be done similarly as before but another feature of complex eigenvalues needs to be handled first. For $\mu$ non-real, the weight functions $w_i$ defined by \eqref{weights} may not be real-valued. At this moment, the observation that $\Re(\xi(x)) > 0$ or, equivalently, $\left| \arg (\xi)\right| < \frac{\pi}{2},$ becomes very useful. Due to this property, and again exploiting asymptotic relations 9.7.1 and 9.7.2 from the book \cite{Abramowitz}, it suffices to take the magnitudes of the weight functions $w_i$, \i.e. to define
$$w_1^c(x) \equiv \left| (-\mu(x))^{1/4} \right|, \quad w_2^c(x) \equiv \left| \exp \left( \int_0^x \sqrt{-\mu(s)}ds \right) \right|.$$
Then relations analogous to \eqref{upper bounds} are valid for $w_i$ replaced by $w_i^c$. It should be noted that, unlike in the previous section, using the weight functions $w_i^c$ and changing the measure of the interval cannot be avoided here. The reason is that now the argument of $K_\alpha, I_\alpha$ is generally not purely imaginary and the solutions will hence display a combination of exponential and oscillatory behaviour. This can again be seen from the aforementioned asymptotic relations for $K_\alpha, I_\alpha.$ Hence, for the sake of its regularity, we will again omit the exponentially growing solution $v$ from our approximation, only preserving the exponentially decaying mode $u$.

As the remaining properties of the solutions $u,v$ (including the asymptotic behaviour for $\left| z \right| \rightarrow 0$, which again implies the validity of Lemma \ref{integral}) remain unchanged, the completion of the proof of the following generalization of Theorem \ref{theorem1} is just a matter of repeating the remaining steps in its proof from section \ref{sect_exponential}. The meaning of the quantities $\mathbf{p}, G$ and $T$ again corresponds to their meaning in Theorem \ref{theorem1}.

\begin{theorem}\label{theorem complex}
Let $\mathbb{Q}(x)$ be a diagonalizable matrix and $\Omega  = (0,L)$. Let the eigenvalue $\mu(x)$ of $\mathbb{Q}(x)$ satisfy
\begin{enumerate}
	\item $\mu \in \mathcal{C}^2 \left( \overline{\Omega} \right),$ 
	\item $\left( \forall x \in\overline{\Omega} \right) (\mu(x) \in \mathbb{C}\setminus \mathbb{R_+}).$
\end{enumerate}
Then the scalar function $y$ given by the implicit relation
$$y(x) = u(x) + \int_x^L G(x,s) T(s)y(s)ds, $$
which defines the solution to $\varepsilon^2 \mathbf{y}'' + \mathbb{Q}(x) \mathbf{y} = 0$ as $\mathbf{y}(x) = y(x) \mathbf{p}(x), $ satisfies 
\begin{equation*}
\lim_{\varepsilon \rightarrow 0} \Vert y_{12} - u_{12} \Vert_{L^\infty(\Omega)} = 0,
\end{equation*}
where $\psi_{12}(x) \equiv w_1^c(x) w_2^c(x) \psi(x)$.

\end{theorem}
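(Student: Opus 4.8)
The plan is to mirror the proof of Theorem \ref{theorem1} step by step, tracking only the modifications forced by a complex spectrum. First I would fix an eigenvalue $\mu(x)$ of $\mathbb{Q}(x)$ satisfying the hypotheses and invoke the preceding Lemma to get $\Re(\xi(x)) > 0$ and $(\xi'(x))^2 = -\mu(x)$ for $\xi(x) = \int_0^x(-\mu)^{1/2}(t)\,dt$. Since the complex square root in \eqref{complex sqrt} obeys the same chain rule as the real one, the functions $u,v$ of \eqref{u,v complex} satisfy the identity \eqref{2nd der}, namely $\psi'' = [-\mu - T]\psi$, with exactly the same $T$ from \eqref{T}. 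Consequently the implicitly defined $y(x) = u(x) + \int_x^L G(x,s)T(s)y(s)\,ds$ again yields a genuine solution of $\varepsilon^2\mathbf{y}'' + \mathbb{Q}\mathbf{y} = 0$ upon setting $\mathbf{y} = y\mathbf{p}$: the amplitude-selection argument via the Fredholm alternative \eqref{eq:Amplitude} depends only on the spatial dependence of the eigenvector, not on $\mu$ being real, so it carries over unchanged. If $\mathbb{Q}$ is real I would additionally record that $\bar\mu$ generates $\bar\xi, \bar u, \bar v, \bar G, \bar T$, so real-valued combinations of solutions are always available.

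Next, for the Neumann-series estimate I would work with $A_{12} = w_1^c(x)w_2^c(x)A$ using the magnitude weights $w_1^c(x) = |(-\mu(x))^{1/4}|$ and $w_2^c(x) = |\exp(\int_0^x\sqrt{-\mu}\,ds)|$. The decisive ingredient is $\Re(\xi) > 0$, equivalently $|\arg\xi| < \pi/2$: combined with the large-argument asymptotics 9.7.1--9.7.2 of \cite{Abramowitz} for $K_{1/3}, I_{1/3}$, this places $\xi$ in the sector where $K_{1/3}$ decays and $I_{1/3}$ grows at rate $\exp(\mp\Re\xi)$, yielding the analogues of \eqref{upper bounds}, $|u(x)| \le M/(w_1^c w_2^c)$ and $|v(x)| \le M w_2^c/w_1^c$, with $M$ independent of the eigenvalue and of $\varepsilon$. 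The kernel bounds \eqref{estimate1}--\eqref{estimate2} then transcribe verbatim and give a bound on $(u_{12}(x)v_{12}(s) - v_{12}(x)u_{12}(s))/(w_2^c(s))^2$ that is uniform in both $x$ and $\varepsilon$.

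It remains to control $T$. It is $\varepsilon$-independent because the scaling $\mu = m(x)/\varepsilon^2$ makes every term of \eqref{T} a ratio of equal powers of $\varepsilon$, hence $T$ is bounded as $\varepsilon \to 0$; and its only singularity at $x = s \to 0$ comes from the $\mu/\xi^2 \sim 1/s^2$ term, which when integrated against $G$ is tamed by Lemma \ref{integral}. That Lemma survives the passage to complex eigenvalues because the small-argument behaviour of $K_{1/3}, I_{1/3}$ near $0$ is unchanged and the prefactor $(\xi/\xi')^{1/2}$ keeps $u,v$ bounded near $0$. Splitting $\int_0^L = \int_0^\rho + \int_\rho^L$ exactly as in section \ref{sect_exponential} gives $\|A_{12}\|_\infty \le (K_1 + K_2)/\sqrt{|\mu^*(\varepsilon)|}$ with $|\mu^*(\varepsilon)| = \min_{[0,L]}|\mu| \to \infty$, and the Neumann estimate $\|y_{12} - u_{12}\|_{L^\infty(\Omega)} \le M\|A_{12}\|_\infty/(1 - \|A_{12}\|_\infty)$ then forces the conclusion as $\varepsilon \to 0$.

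The main obstacle is establishing the weighted bounds \eqref{upper bounds} for a genuinely complex $\xi$: one must verify that $|\arg\xi| < \pi/2$ both keeps the argument of $K_{1/3}, I_{1/3}$ inside the validity sector of the classical asymptotic expansions and ensures that $|u|, |v|$ are actually dominated by the \emph{magnitudes} $w_i^c$ rather than merely by $\exp(\pm\Re\xi)$ times oscillatory lower-order factors. This is precisely where the deliberate exclusion of $\mathbb{R}_+$ from $\sigma(\mathbb{Q})$ (the branch cut of the square root) is used; once the weighted bounds are in place, the remainder of the argument is a direct transcription of sections \ref{sect_exponential} and \ref{sect_oscillatory}.
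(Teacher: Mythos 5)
Your proposal follows the paper's own proof essentially step for step: the preliminary lemma giving $\Re\xi>0$ and $(\xi')^2=-\mu$, the unchanged recurrence/derivative identities and amplitude argument, the magnitude weights $w_1^c,w_2^c$ justified by $|\arg\xi|<\pi/2$ together with the asymptotics 9.7.1--9.7.2 of \cite{Abramowitz}, the conjugation remark for real $\mathbb{Q}$, the persistence of Lemma \ref{integral}, and the final Neumann-series estimate with $|\mu^*(\varepsilon)|\to\infty$. This is correct and is the same route the paper takes, including its identification of the sector condition (the exclusion of $\mathbb{R}_+$ from the spectrum) as the decisive point.
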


The case of a positive eigenvalue is analyzed in section \ref{sect_oscillatory}. There are multiple reasons to exclude it here. Firstly, positive eigenvalues lead to qualitatively different behaviour, as then the solution consists of two (bounded) oscillating modes. Secondly, a transition of the eigenvalue $\mu$ through the positive real semi-axis poses a theoretical difficulty as well in that then the polar angle $\varphi$ of $-\mu$ becomes discontinuous. That means that the derivatives of $\mu$ and the second derivatives of $\xi$, which appear in different quantities (e.g. in $T$) and steps of the proof, might not exist. This cannot be simply resolved by allowing $\left| \arg z \right|$ to exceed $\pi$ since then we might lose the validity of the asymptotic expansion for $I_\alpha,$ and hence the upper bounds for $\left| u \right|$ and $\left| v \right|$ given in terms of $w_1^c, w_2^c.$ It hence appears, at least from this theoretical analysis, that the case of a positive spectrum/eigenvalue studied in section \ref{sect_oscillatory} is distinct from the case of a non-positive eigenvalue analyzed in decent generality in this section.


\section{The WKBJ approximation in a reaction-diffusion setting}\label{sect_WKBJ RD}

In this section, we are going to discuss the relation of the problem \eqref{WKBepsilon} to reaction-diffusion equations and the validity of the approximation theorem given in the previous section for these equations and the closely related Turing model for pattern formation. For the most part, our approach here will be strictly mathematical; a more general introduction to reaction-diffusion (RD) systems is given in Appendix \ref{App Turing}. 


Let us start by recalling the form of the linearized RD equations (see equation \eqref{RD lin App} in Appendix \ref{App Turing})
\begin{equation}\label{RD lin}
\textbf{w}_t = \mathbb{D} \Delta \textbf{w} + \mathbb{J}(x) \mathbf{w}.
\end{equation}
By linearity, we expect exponential dynamics and thus apply the ansatz  $\mathbf{w}(t,x, \lambda) \sim e^{\lambda t} \mathbf{y}(x, \lambda)$. Note that in line with the asymptotic approximation \eqref{1st-order WKBJ}, in a heterogeneous domain we admit explicit dependence of the spatial modes (represented by $\mathbf{y}$) on $\lambda$. Substituting this ansatz back into the equations, we obtain a system of the form \eqref{SR 2D} with 
\begin{equation}\label{Q Turing}
\mathbb{Q}(x) = \mathbb{D}^{-1} \left( \mathbb{J}(x) - \lambda \mathbb{I} \ \right),
\end{equation} 
where $\mathbb{J}(x)$ is the Jacobian matrix of the reaction kinetics. It has been shown \cite{Klika} that, as in the classical Turing instability, the relevant growth rates (with $\Re(\lambda) > 0$) are real in the case of two morphogens. Therefore, if one considers the classical two-species system, the assumption on the spectrum of $\mathbb{Q}$ is equivalent (as we are seeking instability, \i.e. positive growth with $\Re(\lambda)>0$) to requiring the matrix of the linearised kinetics $\mathbb{J}(x)$ to have a negative spectrum. This is consistent with the requirement of a stable steady state in Turing instability, but not necessary (a complex pair of eigenvalues in the left half-plane is sufficient). 

We shall use the asymptotic results from the previous section to show two properties of the solution: (i) in the case of time-scale separation, show the previously reported \cite{Klika} eigenmodes of this spatially heterogeneous system including the estimate of the error in making such an approximation (and generalising it to $n$ coupled RD equations); (ii) the fast growing modes always follow these asymptotic results.

The former immediately follows from the previous section when time-scale separation between diffusion and kinetic processes is possible, \i.e. when the parameter $\varepsilon^2 = \frac{D_1 T}{L^2}$, \footnote{Here $D_1$ is a diffusion coefficient, $T$ is a characteristic kinetic timescale, and $L$ is the domain length. A brief overview of the role of these quantities in a RD model is given in Appendix \ref{App Turing}.} given as a ratio of the characteristic temporal scales of the kinetics and that of the diffusion,  is small. An experimental analysis of nodal and lefty \cite{nodal}, the flagship experimental morphogen pair for mammalian pattern formation via a reaction-diffusion mechanism, reveals that the value of the parameter lies in the region $\varepsilon^2 \sim 5.6 \times 10^{-4}$, thus justifying the asymptotic approach with respect to $\varepsilon$ that Krause et al. adopted\cite{Klika}. Then the fact that the eigenmodes of the spatial operator grow with the rate $\lambda$ follows from equation \eqref{SR 2D} with $\tilde{\mathbb{Q}}$
given by \eqref{Q Turing}.

The latter requires a bit more discussion which is why we are henceforth going to focus on the limit of large growth rates $\lambda \rightarrow +\infty$. Let us therefore discuss the properties of $Q$ in this limit, \i.e. considering growth rates $\lambda$ larger than any temporal time-scale encountered.

\begin{proposition} 
For a $n$-component reaction-diffusion system, the eigenvalues $\mu_j(x)$ of $\mathbb{Q}(x)$ satisfy
$$\mu_j(x) { \rightarrow} - \infty \text{ for } { \lambda \rightarrow + \infty}, \forall x \in [0,L],$$
or more precisely $\mu_j(x) \sim -(1/D_j) \lambda \text{ for } \lambda \rightarrow +\infty$.
\end{proposition}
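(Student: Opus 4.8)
The plan is to exhibit $\mathbb{Q}(x)$ as $-\lambda$ times a matrix that converges, as $\lambda\to+\infty$, to $\mathbb{D}^{-1}$, and then to read off the asymptotics of $\sigma(\mathbb{Q})$ from continuity of the spectrum. Writing $\mathbb{D}=\mathrm{diag}(D_1,\dots,D_n)$ with each $D_j>0$, equation \eqref{Q Turing} gives
\[
\mathbb{Q}(x)\;=\;\mathbb{D}^{-1}\mathbb{J}(x)-\lambda\,\mathbb{D}^{-1}\;=\;-\lambda\left(\mathbb{D}^{-1}-\tfrac{1}{\lambda}\,\mathbb{D}^{-1}\mathbb{J}(x)\right)\;=:\;-\lambda\,\mathbb{M}(x;\lambda).
\]
Since the reaction kinetics are smooth, $\mathbb{J}$ is bounded on the compact interval $[0,L]$, so $\mathbb{M}(x;\lambda)\to\mathbb{D}^{-1}$ as $\lambda\to+\infty$ in any matrix norm, uniformly in $x\in[0,L]$.

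Next I would invoke continuity of the spectrum. The coefficients of the characteristic polynomial of $\mathbb{M}(x;\lambda)$ are polynomials in its entries, hence converge to those of $\mathbb{D}^{-1}$, and the roots of a monic polynomial depend continuously on its coefficients; therefore $\sigma\!\left(\mathbb{M}(x;\lambda)\right)\to\sigma\!\left(\mathbb{D}^{-1}\right)=\{1/D_1,\dots,1/D_n\}$ (as multisets) when $\lambda\to+\infty$. Because $\mathbb{D}^{-1}$ is diagonal, hence normal, the Bauer--Fike theorem makes this quantitative and uniform: every eigenvalue $\eta$ of $\mathbb{M}(x;\lambda)$ satisfies $\min_j\lvert\eta-1/D_j\rvert\le\bigl\lVert\tfrac{1}{\lambda}\mathbb{D}^{-1}\mathbb{J}(x)\bigr\rVert\le C/\lambda$ with $C$ independent of $x$, and exactly $\#\{k:D_k=D_j\}$ eigenvalues of $\mathbb{M}$ cluster at each $1/D_j$. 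Relabelling the eigenvalues $\eta_j(x;\lambda)$ so that $\eta_j(x;\lambda)\to 1/D_j$ and using $\mu_j(x)=-\lambda\,\eta_j(x;\lambda)$, we obtain $\mu_j(x)=-\lambda/D_j+\mathcal{O}(1)$ uniformly in $x$, which is exactly the asserted $\mu_j(x)\sim-(1/D_j)\lambda$ and in particular $\mu_j(x)\to-\infty$.

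The one delicate point — a minor one — is the bookkeeping when the diffusion coefficients are not pairwise distinct, so that $\mathbb{D}^{-1}$ has repeated eigenvalues, the individual branches $\mu_j(x)$ need not be smooth in $\lambda$, and the pairing ``$\mu_j\leftrightarrow D_j$'' only makes sense after counting multiplicities; this is precisely what the Bauer--Fike clustering statement supplies. If one wants only the qualitative claim $\mu_j(x)\to-\infty$, the obstacle disappears entirely: Gershgorin's theorem applied to $\mathbb{Q}(x)$ — whose $j$-th diagonal entry is $(\mathbb{J}_{jj}(x)-\lambda)/D_j$ and whose $j$-th off-diagonal absolute row sum $\sum_{k\neq j}\lvert\mathbb{J}_{jk}(x)\rvert/D_j$ is bounded uniformly in $\lambda$ and $x$ — confines $\sigma(\mathbb{Q}(x))$ to a union of discs whose centres all tend to $-\infty$ with uniformly bounded radii, so $\Re\mu\to-\infty$ for every $\mu\in\sigma(\mathbb{Q}(x))$.
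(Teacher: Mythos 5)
Your argument is correct, and it shares the paper's central idea --- rescaling by $\lambda$ so that the relevant matrix tends to something involving only $\mathbb{D}$ --- but it extracts the eigenvalue asymptotics by a genuinely different mechanism. The paper works directly with the characteristic polynomial: it rewrites $\det(\mathbb{Q}-\mu\mathbb{I})=0$ as $\det[\mathbb{I}+\tfrac{1}{\lambda}(\mu\mathbb{D}-\mathbb{J})]=0$, rules out roots with $\mu=\mathcal{O}(1)$ or $\mu\gg\mathcal{O}(\lambda)$ by dominant balance, and then substitutes $\kappa=\mu/\lambda$ to expand $\det[\mathbb{I}+\kappa\mathbb{D}]+\tfrac{1}{\lambda}\tilde f_1(x)+\mathcal{O}(\lambda^{-2})=0$, which yields the explicit expansion $\mu_j=-\lambda/D_j+f_1(x)+\mathcal{O}(\lambda^{-1})$. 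You instead factor $\mathbb{Q}=-\lambda\,\mathbb{M}(x;\lambda)$ with $\mathbb{M}\to\mathbb{D}^{-1}$ and invoke continuity of roots plus Bauer--Fike for the normal matrix $\mathbb{D}^{-1}$. Your route is cleaner and more careful on two points the paper glosses over: the uniformity in $x$ of the error, and the bookkeeping when diffusion coefficients coincide; your Gershgorin remark also gives the qualitative divergence with essentially no machinery. What the paper's more pedestrian expansion buys, however, is the identification of the $\mathcal{O}(1)$ correction as a concrete function $f_1(x)$ inherited smoothly from $\mathbb{J}(x)$ --- this is exactly what is reused in the subsequent lemma to conclude that $\mu'$ and $\mu''$ are of order unity, a fact your Bauer--Fike bound (which only controls $|\mu_j+\lambda/D_j|$, not its derivatives in $x$) does not by itself deliver. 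For the proposition as stated your proof is complete; for the role the proposition plays downstream, the paper's version carries slightly more information.
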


\begin{proof}

  The eigenvalues of $\mathbb{Q}(x)=\mathbb{D}^{-1}(\mathbb{J}(x)-\lambda\mathbb{I})$ are given as the roots of the polynomial
  \begin{equation*}
    0=\det (\mathbb{Q}-\mu \mathbb{I})=\det(\lambda \mathbb{D}^{-1}) \det (\frac{1}{\lambda}\mathbb{J}-\mathbb{I}-\frac{\mu}{\lambda} \mathbb{D}),
  \end{equation*}
  or, equivalently,
  \begin{equation*}
    0=\det [\mathbb{I}+\frac{1}{\lambda}(\mu\mathbb{D}- \mathbb{J})]\\
     = 1+\frac{1}{\lambda}\left(\mu\mathrm{tr} \mathbb{D} - \mathrm{tr} \mathbb{J}\right) + \mathcal{O}\left(\lambda^{-2}\right).
   \end{equation*}
   As $\mathbb{D} $ and $ \mathbb{J}$ are independent of $\lambda$, it follows that there is no solution for $\mu=\mathcal{O}(1)$ for $\lambda\rightarrow +\infty$. Similarly, there is no solution for $\mu\gg\mathcal{O}(\lambda)$. Finally, there are $n$ roots of the order $\lambda$ (which exhausts the spectrum). Denoting $\kappa= \mu / \lambda =\mathcal{O}(1)$, these roots satisfy
   \begin{equation*}
     0=\det \left[\mathbb{I}+\kappa\mathbb{D}- \frac{1}{\lambda} \mathbb{J} \right]=\det [\mathbb{I}+\kappa\mathbb{D}] + \frac{1}{\lambda} \tilde{f}_1(x)+\mathcal{O}\left(\lambda^{-2}\right),
   \end{equation*}
   and hence $\mu_j = - \frac{1}{D_j}\lambda + f_1(x) + \mathcal{O} \left(\lambda^{-1}\right)$. 
\end{proof}

Note that this result is fairly intuitive since we can view $\lambda$ as shifting the spectrum of $\mathbb{D}^{-1}\mathbb{J}(x)$ to the left (although not 'uniformly' due to the fact that we subtract $\lambda \mathbb{D}^{-1}$ rather than $\lambda \mathbb{I}$). 
In complete analogy to section \ref{system}, we could now go on to define $\xi(x)$ by \eqref{xi}, $u(x)$ and $v(x)$ by \eqref{u and v}, the function $y(x)$ by \eqref{int solution} and finally the solution to the vectorial problem using the eigenvector $\mathbf{p}$, $\mathbf{y}=y\mathbf{p}$. Again, we wish to show that the implicit solution $\mathbf{y}$ is well approximated by the Airy functions for large growth rates $\lambda$.  

To reproduce the results from Section \ref{system}, we need to verify certain properties in the limit of large growth rates.
The previous proposition guarantees that the key property 
\begin{equation*}
\mu(x) \sim - C \lambda \text{ for } { \lambda \rightarrow + \infty}, \forall x \in [0,L],
\end{equation*}
with $C > 0,$ is preserved. Note that it is not the linearity in $\lambda$ that is crucial. By substitution, we could allow for any fixed polynomial dependence on an unbounded parameter. It is the fact that the eigenvalue diverges in the same manner for every $x$ in the limit $\lambda \rightarrow + \infty$ that implies the boundedness of the function $T$ given by \eqref{T} with respect to $\lambda$, as we need to guarantee the boundedness of the ratio $\frac{\mu}{\xi^2}$. In this context, it is worth mentioning that under these assumptions, the role of $\lambda$ exactly corresponds to the role of $1/\varepsilon^2$ in the previous section. 

As the properties of the modified Bessel functions $u$ and $v$ remain unchanged, the only step left in the proof of an approximation theorem analogous to \ref{theorem1} is to show that the integral operator defining the solution via \eqref{int solution} is bounded with respect to both $x$ and $\lambda$. The analysis of boundedness in $x$ will be no different from the case in section \ref{system}. Boundedness in $\lambda$ follows from the boundedness 
of the derivatives of the eigenvalues $\mu$ with respect to arbitrarily large growth rates $\lambda$. 

\begin{lemma}\label{unity}
The functions $\mu'(x), \mu''(x)$ are of order unity for $\lambda \rightarrow +\infty$. 
\end{lemma}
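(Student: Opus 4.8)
The plan is to leverage the scaling already used in the previous proposition: the $\lambda$-dependence of $\mathbb{Q}(x)=\mathbb{D}^{-1}\mathbb{J}(x)-\lambda\mathbb{D}^{-1}$ enters only through the term $-\lambda\mathbb{D}^{-1}$, which is independent of $x$. Concretely, I would put $\delta=1/\lambda$ and $\kappa=\mu/\lambda$, so that $\kappa$ is a root of
$$g(\kappa,x,\delta)\equiv\det\big[\mathbb{I}+\kappa\mathbb{D}-\delta\,\mathbb{J}(x)\big],$$
a polynomial of degree $n$ in $\kappa$ whose coefficients are polynomials in $\delta$ with $\mathcal{C}^2$-dependence on $x$ (inherited from $\mathbb{J}$). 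At $\delta=0$ one has $g(\kappa,x,0)=\prod_j(1+\kappa D_j)$, whose roots $\kappa_j=-1/D_j$ are simple provided the diffusion coefficients are mutually distinct — the generic and physically relevant case. The implicit function theorem applied at each $(\kappa_j,x,0)$, together with compactness of $\overline{\Omega}=[0,L]$, then yields root branches $\kappa_j(x,\delta)$ that are $\mathcal{C}^2$ in $x$ and analytic in $\delta$ on a fixed neighbourhood of $\delta=0$, uniformly for $x\in[0,L]$, and $\mu_j(x)=\lambda\,\kappa_j(x,1/\lambda)$.

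The crux is a structural observation: since $g$ depends on $x$ only through the term $-\delta\,\mathbb{J}(x)$, each of $\partial_x g$, $\partial_x^2 g$, $\partial_\kappa\partial_x g$ carries an explicit factor of $\delta$ and is thus $\mathcal{O}(\delta)$ on the relevant compact parameter set, whereas $\partial_\kappa g$ and $\partial_\kappa^2 g$ are $\mathcal{O}(1)$. Differentiating the identity $g(\kappa_j(x,\delta),x,\delta)\equiv0$ in $x$ once and twice gives
$$\kappa_j'=-\frac{\partial_x g}{\partial_\kappa g},\qquad
\kappa_j''=-\frac{\partial_\kappa^2 g\,(\kappa_j')^2+2\,\partial_\kappa\partial_x g\,\kappa_j'+\partial_x^2 g}{\partial_\kappa g}.$$
Because the roots are simple at $\delta=0$, $\partial_\kappa g$ stays bounded away from $0$ for $x\in[0,L]$ and $\delta$ small; hence $\kappa_j'=\mathcal{O}(\delta)$, and substituting this into the second identity gives $\kappa_j''=\mathcal{O}(\delta)$ as well, both uniformly in $x$. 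Undoing the scaling yields $\mu_j'(x)=\lambda\,\kappa_j'(x,1/\lambda)=\mathcal{O}(1)$ and $\mu_j''(x)=\lambda\,\kappa_j''(x,1/\lambda)=\mathcal{O}(1)$ as $\lambda\to+\infty$. Extracting the $\mathcal{O}(\delta)$-coefficient in $\kappa_j'$ (equivalently, computing the eigenvalue derivative in the eigenbasis of $\mathbb{D}$) further shows $\mu_j'(x)\to (\mathbb{J}'(x))_{jj}/D_j$ and $\mu_j''(x)\to (\mathbb{J}''(x))_{jj}/D_j$, confirming that these quantities are genuinely of order unity rather than merely bounded.

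The main obstacle I anticipate is the uniformity of the implicit-function step, which rests on the roots $\kappa_j=-1/D_j$ being simple; if $\mathbb{D}$ has repeated entries this fails, and one would instead invoke degenerate perturbation theory — or, equivalently, Kato's analytic perturbation theory applied to $\mathbb{Q}(x)=-\lambda\mathbb{D}^{-1}+\mathbb{D}^{-1}\mathbb{J}(x)$ with the large term $x$-independent — the conclusion that the $x$-derivatives of the eigenvalues remain $\mathcal{O}(1)$ being unchanged. Everything else is routine bookkeeping once the factor-of-$\delta$ structure of $\partial_x g$ has been recorded.
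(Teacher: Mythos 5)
Your proof is correct and follows essentially the same route as the paper's: both rest on the observation that the $\lambda$-dependence of $\mathbb{Q}(x)$ enters only through the $x$-independent term $-\lambda\mathbb{D}^{-1}$, so that in the expansion $\mu_j = -\lambda/D_j + f_1(x) + \mathcal{O}(\lambda^{-1})$ the spatial derivatives act only on the $\mathcal{O}(1)$ part. The paper's proof is a one-line appeal to differentiating this expansion term by term, whereas you supply the implicit-function-theorem bookkeeping (the explicit factor of $\delta=1/\lambda$ in $\partial_x g$, uniformity on $[0,L]$, and the caveat about repeated diffusion coefficients) that justifies this differentiation — a worthwhile tightening, but not a different argument.
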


\begin{proof}
This observation directly follows from the explicit form of the eigenvalues as obtained in the proof of the last proposition: $\mu^{(k)}= f_1^{(k)}(x)+\mathcal{O}\left(\lambda^{-1}\right)$ for $k\in\mathbb{N}$. 
\end{proof}

The following observation can be made based on this lemma.
\begin{proposition}
Let $\mu(x)$ be an eigenvalue of $\mathbb{Q}(x) = \mathbb{D}^{-1} \left( \mathbb{J}(x) - \lambda \mathbb{I} \ \right)$ and set $\xi(x) \equiv \int_0^x \sqrt{-\mu(t)}dt$. Then the function
\begin{equation*}
T(x; \lambda) = \frac{1}{4} \left[ -\frac{5}{9} \frac{\mu(x)}{\xi^2(x)} + \frac{\mu''(x)}{\mu(x)} - \frac{5}{4} \frac{\mu'^2(x)}{\mu^2(x)} \right]
\end{equation*}
is bounded on $\Omega_{\lambda_0} \equiv [\delta,L] \times (\lambda_0, +\infty)$ for some $\lambda_0 > 0$ and $\delta>0$. In addition, $T[x;\lambda]$ has an integrable singularity at $x=0$ in the sense of lemma \ref{integral}.
\end{proposition}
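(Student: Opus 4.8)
The plan is to combine the three terms in $T(x;\lambda)$ one at a time, using Lemma \ref{unity} to control the $\lambda$-dependence and the asymptotic behaviour of $\mu$ near the origin to diagnose the singularity at $x=0$. Write $\mu(x) = -\frac{1}{D}\lambda + f_1(x) + \mathcal{O}(\lambda^{-1})$ as in the proof of the proposition preceding Lemma \ref{unity}, where $D$ is the relevant diffusion coefficient. Since $\mu(x) \sim -\lambda/D \to -\infty$ uniformly on $[0,L]$, there is $\lambda_0>0$ such that $\mu(x) \le -c\lambda$ on $[0,L]\times(\lambda_0,\infty)$ for some $c>0$; in particular $1/\mu$ and $1/\mu^2$ are bounded by $\mathcal{O}(\lambda^{-1})$ and $\mathcal{O}(\lambda^{-2})$ there, uniformly in $x$.

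First I would handle the last two terms, $\mu''/\mu$ and $\mu'^2/\mu^2$. By Lemma \ref{unity}, $\mu'$ and $\mu''$ are $\mathcal{O}(1)$ for $\lambda\to+\infty$ (uniformly on the compact interval, since $f_1 \in \mathcal{C}^2([0,L])$ by smoothness of the kinetics and the implicit function theorem). Combined with the lower bound $|\mu| \ge c\lambda$, this gives $|\mu''/\mu| \le C\lambda^{-1}$ and $|\mu'^2/\mu^2| \le C\lambda^{-2}$ on $\Omega_{\lambda_0}$, so both terms are bounded (indeed vanishing) on all of $[0,L]\times(\lambda_0,\infty)$ — no restriction to $[\delta,L]$ is needed for these. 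The only delicate term is therefore the first one, $\mu/\xi^2$. Away from the origin, $\xi(x) = \int_0^x\sqrt{-\mu(t)}\,dt \ge x\sqrt{c\lambda}$, so $\xi(x)^2 \ge \delta^2 c\lambda$ on $[\delta,L]$, whence $|\mu/\xi^2| \le \frac{C\lambda}{\delta^2 c\lambda} = C/(\delta^2 c)$, bounded on $\Omega_{\lambda_0}$. This establishes boundedness on $\Omega_{\lambda_0}$.

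For the integrability of the singularity at $x=0$, I would note that near $x=0$, $\xi(x) = x\sqrt{-\mu(0)} + \mathcal{O}(x^2)$, so $\xi(x)^2 \sim -\mu(0)\,x^2$ and $\mu(x)/\xi(x)^2 \sim \mu(0)/(-\mu(0)x^2) = -1/x^2$ — exactly the non-integrable pointwise singularity already identified in the discussion preceding Lemma \ref{integral} in section \ref{sect_exponential}. The resolution is identical: the relevant quantity in the operator $A$ is not $T(s)$ alone but $G(x,s)T(s)$, and Lemma \ref{integral} shows $\lim_{\delta\to 0_+}\int_\delta^{\delta+\rho} G(x,s)/\xi(s)^2\,ds$ is bounded for finite $\rho>0$, because the pre-factor $(\xi/\xi')^{1/2}$ in $u,v$ and the small-argument behaviour of $K_{1/3}, I_{1/3}$ tame the $1/s^2$ blow-up. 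Since the $\mu''/\mu$ and $\mu'^2/\mu^2$ contributions to $T$ are bounded near $0$ (as shown above, they are $\mathcal{O}(\lambda^{-1})$ uniformly in $x$), the singular behaviour of $T$ at $x=0$ is governed solely by the $\mu/\xi^2$ term, and hence $T[x;\lambda]$ has an integrable singularity at $x=0$ in the sense of Lemma \ref{integral}, uniformly for $\lambda>\lambda_0$ because all the asymptotic constants ($\mu(0) = -\lambda/D + \mathcal{O}(1)$, etc.) scale compatibly.

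The main obstacle is ensuring that \emph{all} the estimates are uniform in $\lambda$ simultaneously, in particular that the $\mathcal{O}(1)$ bounds on $\mu',\mu''$ from Lemma \ref{unity} and the $\mathcal{O}(x^2)$ error in $\xi(x) = x\sqrt{-\mu(0)}+\mathcal{O}(x^2)$ hold with constants independent of $\lambda$ on the whole range $(\lambda_0,\infty)$; this requires tracking that the $\mathcal{O}(\lambda^{-1})$ remainders in the expansion $\mu = -\lambda/D + f_1 + \mathcal{O}(\lambda^{-1})$ are uniform in $x\in[0,L]$, which follows from the analyticity of the characteristic polynomial in $(\mu/\lambda, 1/\lambda)$ and continuity of its coefficients (entries of $\mathbb{D},\mathbb{J}(x)$) in $x$ on the compact interval. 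Once this uniformity is in hand, the claimed boundedness on $\Omega_{\lambda_0}$ and the integrable-singularity statement follow by combining the three term-wise estimates exactly as in section \ref{sect_exponential}.
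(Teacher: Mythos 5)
Your proposal is correct and follows essentially the same route as the paper: bounding $\mu''/\mu$ and $\mu'^2/\mu^2$ via Lemma \ref{unity} and the lower bound $|\mu|\gtrsim\lambda$, controlling $\mu/\xi^2$ on $[\delta,L]$ by the common linear-in-$\lambda$ scaling of $\mu$ and $\xi^2$, and deferring the $1/x^2$ singularity at the origin to Lemma \ref{integral}. Your version merely spells out the quantitative constants and the uniformity in $\lambda$ that the paper's shorter argument leaves implicit.
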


\begin{proof}
Let us start by recalling that for $\lambda$ sufficiently large the eigenvalue $\mu$ is strictly negative on the entirety of $[0,L]$ by the properties of $\mathbb{J}(x)$. This fact, together with the continuity of $\mu$ on $[0,L]$ and lemma \ref{unity} prove that the second and the third fraction are bounded on $\Omega_\lambda$. The boundedness of the first fraction with respect to $\lambda$ follows from the fact that both $\mu$ and $\xi^2$ are asymptotically linear in $\lambda$ for $\lambda$ approaching $+\infty$. Integrability with respect to $x$ again follows from the properties of $u, v$ and $\xi$, namely the estimates \eqref{estimate1} and \eqref{estimate2} and lemma \ref{integral}. 
\end{proof}

Using the spectral properties of a typical RD system, we have now replicated all properties employed in section \ref{system}, and hence proved the following theorem in the limit of large growth rates. The quantities $ w_1(x), w_2(x), u(x)$ and $\mathbf{p}$ are defined as in section \ref{system}.
\begin{theorem}\label{theorem RD}
Let $\mathbb{J}(x)$ be a diagonalizable matrix of reaction kinetics with $\sigma(\mathbb{J}(x)) \subset (-\infty,0)$ for all $x\in\overline{\Omega}$ with $\Omega  = (0,L)$. Let the eigenvalues $\mu(x) $ of $\mathbb{Q}(x)$ satisfy $\mu \in \mathcal{C}^2(\overline{\Omega})$. Then the scalar function $y$ given by the implicit relation
$$y(x) = u(x) + \int_x^L G(x,s) T(s)y(s)ds, $$
which defines the solution to $\varepsilon^2 \mathbf{y}'' + \mathbb{Q}(x) \mathbf{y} = 0$ as $\mathbf{y}(x) = y(x) \mathbf{p} $, satisfies 
\begin{equation*}
\lim_{\lambda \rightarrow +\infty} \Vert y_{12} - u_{12} \Vert_{L^\infty(\Omega)} = 0,
\end{equation*}
where $\psi_{12}(x) \equiv w_1(x) w_2(x) \psi(x)$.

\end{theorem}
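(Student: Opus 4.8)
The plan is to observe that Theorem~\ref{theorem RD} is not a genuinely new result but rather a transcription of Theorem~\ref{theorem1} into the reaction-diffusion language, where the role of the unbounded asymptotic parameter $1/\varepsilon^2$ is played by the large growth rate $\lambda$. The whole argument of section~\ref{sect_exponential} hinges on exactly four structural facts about $\tilde{\mathbb{Q}}(x;\varepsilon) = \varepsilon^{-2}\mathbb{Q}(x)$: (a) the spectrum of the relevant matrix is real and strictly negative on $\overline{\Omega}$; (b) the eigenvalues diverge to $-\infty$ \emph{uniformly in $x$} in the limiting regime, in the precise sense that $\mu(x)\sim -C(x)\lambda$ with $C(x)>0$; (c) the derivatives $\mu'(x),\mu''(x)$ stay of order unity as the parameter grows, so that $T$ defined by \eqref{T} is bounded in the parameter and has only the integrable $x=0$ singularity controlled by Lemma~\ref{integral}; and (d) $\mathbb{Q}(x)$ (hence the matrix whose eigenvalue is $\mu$) is diagonalizable with $\mu\in\mathcal{C}^2(\overline{\Omega})$, so the rotation $\mathbf{y}=y\mathbf{p}(x)$ and the amplitude equation \eqref{eq:Amplitude} go through verbatim. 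First I would restate these as the hypotheses to be verified.

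Next I would verify (a)--(d) one at a time, quoting the results already established in this section. For (b), the Proposition preceding Lemma~\ref{unity} gives $\mu_j(x)\sim -(1/D_j)\lambda$ as $\lambda\to+\infty$; since the diffusion coefficients $D_j>0$ are fixed and $x$-independent, the divergence is uniform in $x$ with $C(x)=1/D_j>0$. For (a), one notes that for $\lambda$ large enough the leading term $-\lambda/D_j$ dominates the bounded $f_1(x)+\mathcal{O}(\lambda^{-1})$ correction uniformly on the compact $\overline{\Omega}$ (using $\mathbb{J}\in\mathcal{C}^2(\overline{\Omega})$ and diagonalizability, so that the eigenvalues are continuous in $x$), whence $\mu_j(x)<0$ throughout $[0,L]$ for all $\lambda>\lambda_0$; the hypothesis $\sigma(\mathbb{J}(x))\subset(-\infty,0)$ is what lets $\lambda_0$ be taken as small as one wishes but is in any case not needed for the large-$\lambda$ statement. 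For (c), Lemma~\ref{unity} gives $\mu',\mu''=\mathcal{O}(1)$ and the Proposition immediately before Theorem~\ref{theorem RD} packages this into boundedness of $T$ on $\Omega_{\lambda_0}$ together with the integrable singularity at $x=0$; in particular $\mu/\xi^2$ is bounded because both $\mu$ and $\xi^2=\big(\int_0^x\sqrt{-\mu}\big)^2$ are asymptotically linear in $\lambda$, so their ratio is $\mathcal{O}(1)$. Fact (d) is a direct hypothesis of the theorem.

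With (a)--(d) in hand, the proof proceeds exactly as in section~\ref{sect_exponential}: define $\xi$ by \eqref{xi}, the functions $u,v$ by \eqref{u and v} (which by Lemma~\ref{T proof} satisfy \eqref{2nd der} with the \emph{same} $T$), the weights $w_1,w_2$ by \eqref{weights}, the kernel $G(x,s)=u(x)v(s)-v(x)u(s)$ and the operator $A_{12}=w_1 w_2 A$; then \eqref{int solution} defines $y$, and $\mathbf{y}=y\mathbf{p}$ solves $\varepsilon^2\mathbf{y}''+\mathbb{Q}\mathbf{y}=0$ after the amplitude adjustment \eqref{eq:Amplitude}. The estimates \eqref{upper bounds}, \eqref{estimate1}, \eqref{estimate2} on the weighted kernel are global and parameter-free, $1/w_1^2=1/\sqrt{-\mu}$ is bounded in $x$ (for fixed $\lambda$) and the full splitting $\int_0^L=\int_0^\rho+\int_\rho^L$ gives
\begin{equation*}
\Vert A_{12}\Vert_\infty \leq \frac{1}{\sqrt{-\mu^*(\lambda)}}\,(K_1+K_2),
\end{equation*}
with $K_1,K_2$ constants independent of $x$ and $\lambda$ (the $K_1$ piece coming from Lemma~\ref{integral}) and $-\mu^*(\lambda)=-\max_{[0,L]}\mu(x)\to+\infty$ as $\lambda\to+\infty$ by (b). Hence $\Vert A_{12}\Vert_\infty\to 0$, the Neumann series for $(\mathbb{I}-A_{12})^{-1}$ converges, and $\Vert y_{12}-u_{12}\Vert_{L^\infty(\Omega)}\leq M\,\Vert A_{12}\Vert_\infty/(1-\Vert A_{12}\Vert_\infty)\to 0$.

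The only genuinely new point over section~\ref{system}, and the step I expect to need the most care, is the uniformity in $x$ of all the asymptotics: in section~\ref{system} the clean scaling $\mu(x)=\varepsilon^{-2}m(x)$ made the separation of the $x$- and $\varepsilon$-dependence automatic, whereas here $\lambda$ enters $\mathbb{Q}=\mathbb{D}^{-1}(\mathbb{J}(x)-\lambda\mathbb{I})$ additively and the perturbative expansion $\mu_j(x)=-\lambda/D_j+f_1(x)+\mathcal{O}(\lambda^{-1})$ carries $x$-dependent remainder terms. One must check that the error terms and the functions $f_1,f_1',f_1''$ are bounded \emph{uniformly} on the compact interval $\overline{\Omega}$ — which follows from $\mathbb{J}\in\mathcal{C}^2(\overline{\Omega})$ and the simplicity/smoothness of the $-1/D_j$ eigenvalue branches for large $\lambda$ — so that the bounds $K_1,K_2$ really are $x$- and $\lambda$-independent and $-\mu^*(\lambda)$ grows like $\lambda$. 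Once this uniformity is secured, the theorem is immediate.
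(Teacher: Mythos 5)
Your proposal follows essentially the same route as the paper: verify the spectral properties (eigenvalue asymptotics $\mu_j\sim-\lambda/D_j$, order-unity derivatives via Lemma~\ref{unity}, boundedness of $T$ and the integrable singularity) and then rerun the Neumann-series argument of section~\ref{sect_exponential} with $\lambda$ playing the role of $1/\varepsilon^2$, which is precisely how the paper assembles Theorem~\ref{theorem RD} from the two propositions and Lemma~\ref{unity} preceding it. Your added attention to uniformity in $x$ of the remainder terms is a worthwhile refinement of a step the paper treats more briefly, but it does not change the argument.
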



\section{Discussion}
In an attempt to justify the application of the asymptotic WKBJ analysis to a system of linearly coupled reaction-diffusion equations employed in the literature, most recently by Krause et al. \cite{Klika}, we have offered here theorems validating this approximation for systems of ordinary differential equations. First, we build a tool for WKBJ analysis of a linearly coupled system with an almost arbitrary coupling matrix $\mathbb{Q}$. Then we apply it to the analysis of a reaction-diffusion problem where the spectrum of $\mathbb{Q}$ has been shown to be constrained to the real axis in the case of Turing instability. This, however, is only true in a network of two interacting morphogens. Hence, our results allow a direct extension of the previous findings to a higher number of interacting species.

There are several differences between the approach adopted by Krause et al. \cite{Klika} and ours. Most notably, our approximation was not constructed directly for the modes of the form \eqref{1st-order WKBJ}, but instead used these classical WKBJ modes in the proof as weight functions with asymptotic behaviour corresponding to that of our approximate solutions. Consequently, in the case of oscillatory solutions, unlike the solutions presented in their work, our solutions do not reduce to harmonic eigenmodes, so typical for linearized Turing systems without explicit spatial dependence in the kinetics. Although the character of the Bessel function of the first kind, $J_\alpha(x)$, is indeed reminiscent of a damped harmonic mode, its zeros are not generally periodic and only attain periodicity in the limit $x \rightarrow +\infty$. Hence, neither a proper asymptotic expansion of $J_\alpha(x)$ in terms of harmonic functions, nor the converse are possible. It is therefore conceivable that a more direct approach to proving the WKBJ theory for systems of ODEs is possible.

The possibility of further generalizations of our results, most notably of theorems \ref{theorem1} and \ref{theorem2}, to higher spatial dimensions presents another relevant issue. WKBJ method in higher dimensions is typically used to estimate behaviour in certain distinguished direction, effectively reducing the problem to one dimension. A notable exception is \cite{keller1960asymptotic} whose ideas might be shown to be extendable to similar analysis as the one presented in this article.



\section*{Acknowledgement}
Authors are grateful for support from the Czech Grant Agency, project number 20-22092S.


\appendix

\section{Details of the proof from section \ref{system}} \label{details}
In this appendix we are again going to omit the subscripts $j$ as the specific choice of the eigenvalue has no effect on the validity of the considerations and observations made here.
\begin{lemma}\label{T proof}
The functions $$u(x) = \sqrt{\frac{\xi(x)}{\xi'(x)}} K_{1/3} \left( \xi(x) \right), \qquad
v(x) = \sqrt{\frac{\xi(x)}{\xi'(x)}} I_{1/3} \left( \xi(x) \right) $$
both satisfy the relation
\begin{equation} \label{psi}
\psi'' = \left[ (\xi')^2 - T \right] \psi,
\end{equation}
with
\begin{equation*}
T(x) = \frac{1}{4} \left[ \frac{5}{9} \left( \frac{\xi'(x)}{\xi(x)} \right)^2 + 2 \frac{\xi'''(x)}{\xi'(x)} - 3 \left( \frac{ \xi''(x) }{\xi'(x)} \right)^2 \right].
\end{equation*}
\end{lemma}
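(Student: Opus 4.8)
The plan is to handle $u$ and $v$ simultaneously by observing that $K_{1/3}$ and $I_{1/3}$ solve the \emph{same} linear equation, namely the modified Bessel equation $z^2 f'' + z f' - (z^2 + \alpha^2) f = 0$ with $\alpha = 1/3$. Thus it suffices to show that for \emph{any} solution $f$ of this equation the function $\psi(x) = g(x)\,f(\xi(x))$ with $g(x) = \sqrt{\xi(x)/\xi'(x)}$ satisfies \eqref{psi}; the lemma then follows by specializing $f$ to $K_{1/3}$ and $I_{1/3}$ and $\alpha$ to $1/3$. Conceptually this is a Liouville-type transformation (a change of independent variable $z \mapsto \xi(x)$ followed by multiplication by an amplitude factor), and the factor $\sqrt{\xi/\xi'}$ is chosen precisely to cancel the first-order term.

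Carrying this out, I would first differentiate $\psi = g\cdot(f\circ\xi)$ twice,
\begin{equation*}
\psi'' = g''\,(f\circ\xi) + \bigl(2g'\xi' + g\xi''\bigr)(f'\circ\xi) + g(\xi')^2\,(f''\circ\xi),
\end{equation*}
and then eliminate $(f''\circ\xi)$ using the Bessel equation in the form $f''(z) = -\tfrac1z f'(z) + \bigl(1 + \tfrac{\alpha^2}{z^2}\bigr)f(z)$. This yields
\begin{equation*}
\psi'' = \Bigl[g'' + g(\xi')^2\bigl(1 + \tfrac{\alpha^2}{\xi^2}\bigr)\Bigr](f\circ\xi) + \Bigl[2g'\xi' + g\xi'' - \tfrac{g(\xi')^2}{\xi}\Bigr](f'\circ\xi).
\end{equation*}
The key check is that the bracket multiplying $(f'\circ\xi)$ vanishes for $g = \sqrt{\xi/\xi'}$: indeed $g'/g = \tfrac12(\xi'/\xi - \xi''/\xi')$, whence $2g'\xi' + g\xi'' = g(\xi')^2/\xi$. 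Consequently $\psi'' = \bigl[(\xi')^2 - T\bigr]\psi$ with $T = -\,g''/g - \alpha^2(\xi'/\xi)^2$.

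It then remains to compute $g''/g$ by logarithmic differentiation. From $g'/g = \tfrac12(\xi'/\xi) - \tfrac12(\xi''/\xi')$ and $g''/g = (g'/g)' + (g'/g)^2$, collecting terms gives
\begin{equation*}
\frac{g''}{g} = -\frac14\Bigl(\frac{\xi'}{\xi}\Bigr)^2 - \frac12\frac{\xi'''}{\xi'} + \frac34\Bigl(\frac{\xi''}{\xi'}\Bigr)^2,
\end{equation*}
so that with $\alpha^2 = \tfrac19$ one obtains $T = \tfrac14\bigl[\tfrac59(\xi'/\xi)^2 + 2\,\xi'''/\xi' - 3(\xi''/\xi')^2\bigr]$, as claimed; the coefficient $\tfrac59$ arises as $4\bigl(\tfrac14 - \tfrac19\bigr)$. (Alternatively, one may follow the route flagged in the main text, differentiating $u$ and $v$ directly and repeatedly invoking the recurrence relations \eqref{recurrent} together with $K_{-\alpha} = K_\alpha$; this avoids citing the defining ODE at the cost of considerably heavier algebra.)

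The argument is entirely elementary, so there is no real obstacle — the only care needed is bookkeeping when expanding $(g'/g)' + (g'/g)^2$ in terms of the ratios $\xi'/\xi$, $\xi''/\xi'$, $\xi'''/\xi'$, and making sure the $\alpha^2$ term merges correctly with the $-\tfrac14(\xi'/\xi)^2$ produced by $g''/g$. Regularity of $\xi$ near $x=0$, where $\xi'/\xi$ is singular, is not at issue for this identity and is addressed separately.
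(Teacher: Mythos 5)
Your proof is correct, and it takes a genuinely different route from the paper's. The paper differentiates $u$ twice directly, then uses the recurrences \eqref{recurrent} together with $K_{-\alpha}=K_\alpha$ to write $K_{1/3}'$ and $K_{1/3}''$ in terms of $K_{1/3}$ and $K_{4/3}$, expands $u''=c_{1/3}K_{1/3}(\xi)+c_{4/3}K_{4/3}(\xi)$, and verifies by explicit algebra that $c_{4/3}=0$ while $c_{1/3}=\sqrt{\xi/\xi'}\,[(\xi')^2-T]$; the case of $v$ is then declared analogous and must in principle be redone with $I_{1/3},I_{4/3}$. You instead invoke only the fact that $K_{1/3}$ and $I_{1/3}$ solve the same modified Bessel equation and run a Liouville-type transformation: eliminate $f''\circ\xi$ via the ODE, observe that the amplitude $\sqrt{\xi/\xi'}$ is precisely the one annihilating the coefficient of $f'\circ\xi$ (the identity $2g'\xi'+g\xi''=g(\xi')^2/\xi$, which I checked), and read off $T=-g''/g-\alpha^2(\xi'/\xi)^2$ by logarithmic differentiation; I verified that this reproduces $\frac14\bigl[\frac59(\xi'/\xi)^2+2\xi'''/\xi'-3(\xi''/\xi')^2\bigr]$, with $\frac59=4(\frac14-\frac19)$ as you note. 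Your argument handles $u$ and $v$ in a single pass, makes the role of the prefactor and the origin of the $5/9$ transparent, and generalizes to arbitrary order $\alpha$; what the paper's computation buys is that it stays entirely within the recurrence relations it needs elsewhere and the vanishing of $c_{4/3}$ serves as a built-in consistency check. Both are complete proofs; yours is the more economical.
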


\begin{proof}

We are going to offer the proof for $u$, the proof for $v$ is analogous. To simplify the notation, we are going to omit the argument $x$ of $\xi$ and its derivatives. Let us start by expressing the first two derivatives of $u$ with respect to $x$:
\begin{equation*}
\begin{aligned}
u'(x) = & \frac{1}{2} \sqrt{\frac{\xi'}{\xi}} \left( 1 - \frac{\xi \xi''}{\xi'^2}\right) K_{1/3} ( \xi ) + \sqrt{\frac{\xi}{\xi'}} K_{1/3}' ( \xi ) \xi', \\
u''(x)  = & - \frac{1}{4} \left( \frac{\xi'}{\xi} \right)^{3/2} \left( 1 - \frac{\xi \xi''}{\xi'^2}\right)^2 K_{1/3} ( \xi ) - \frac{1}{2} \sqrt{\frac{\xi'}{\xi}} \left( \frac{\xi' \xi'' + \xi \xi'''}{\xi'^2} - \frac{2 \xi \xi' (\xi'')^2}{\xi'^4} \right) K_{1/3} ( \xi ) \\
& + \sqrt{\frac{\xi'}{\xi}} \left( 1 - \frac{\xi \xi''}{\xi'^2}\right) K_{1/3}' ( \xi ) \xi' + \sqrt{\frac{\xi}{\xi'}} K_{1/3}'' ( \xi ) (\xi')^2 + \sqrt{\frac{\xi}{\xi'}} K_{1/3}' ( \xi ) \xi''.
\end{aligned}
\end{equation*} 
Using the recurrent relations \eqref{recurrent} and the symmetric property $K_{-\alpha}(x) = K_\alpha(x)$, we now express
\begin{equation}\label{derivatives}
\begin{aligned}
K_{1/3}'(z) & = \frac{1}{3z}K_{1/3}(z) - K_{4/3}(z), \\
K_{1/3}''(z) & = \frac{9z^2 - 2}{9z^2}K_{1/3}(z) + \frac{1}{z} K_{4/3}(z).
\end{aligned}
\end{equation} 
Substituting back and expressing $u''(x) = c_{1/3}(x)K_{1/3}(\xi) + c_{4/3}(x)K_{4/3}(\xi)$, we find $c_{4/3}(x) = 0$ and
$$c_{1/3}(x) = \sqrt{\frac{\xi}{\xi'}} \left[ -\frac{5}{36} \left( \frac{\xi'}{\xi} \right)^{2} + \frac{3}{4} \left( \frac{\xi''}{\xi'} \right)^{2} - \frac{1}{2} \frac{\xi'''}{\xi'} + (\xi')^2 \right] =  \sqrt{\frac{\xi}{\xi'}} \left[ (\xi')^2 - T \right].$$

\end{proof}

\begin{remark}\label{imaginary xi}
Note that if the eigenvalue in the definition of $\xi$, equation \eqref{xi}, has the opposite (positive) sign (over the entire interval $[0,L]$), the argument $\xi$ of $K_\alpha, I_\alpha$  becomes purely imaginary. However, this has no effect on either the pre-factor $\sqrt{{\xi}/{\xi'}}$ or the recurrent relations \eqref{recurrent}. For more clarity, let us reintroduce $\bar{\xi}(x) = \int_0^x \sqrt{\mu(t)} dt \in \mathbb{R}$ and 
$$\bar{u}(x) \equiv \sqrt{\frac{\bar{\xi}(x)}{\bar{\xi}'(x)}} K_{1/3}(-i \bar{\xi}(x)),$$
$$\bar{v}(x) \equiv \sqrt{\frac{\bar{\xi}(x)}{\bar{\xi}'(x)}} I_{1/3}(-i \bar{\xi}(x))).$$
Unlike $u$ and $v$, these functions are not exponential but rather oscillatory. This can be expressed more precisely recalling the relations \eqref{I alpha}.
for $z = -i \bar{\xi}$. The pre-factor $(-i)^\alpha$ there guarantees the right phase constant so that all of $I_\alpha, J_\alpha, K_\alpha$ are real-valued for real as well as purely imaginary arguments. Differentiating $\bar{u}$ and $\bar{v}$ again as composite functions of $x$ and setting $z = -i \bar{\xi}$ in the relations \eqref{derivatives}, we can again express second derivatives of $\bar{u}$ in terms of $K_{1/3}, K_{4/3}$ (and analogously for $\bar{v}$ using $I_{1/3}, I_{4/3}$). Since the coefficient $\bar{c}_{4/3}(x)$ just obtains a factor of imaginary unit compared to ${c}_{4/3}(x)$, we will this time find that $\bar{u}, \bar{v}$ satisfy
$$\psi''(x) = [- \left( \bar{\xi}'(x) \right)^2 -T(x)] \psi (x) = [-\mu(x) - T(x)] \psi(x). $$
It is worth mentioning that the value of $T(x)$ remains invariant if $\xi$ is replaced by $\pm i \xi$ so there is no necessity to redefine it in terms of $\bar{\xi}$. 

Alternatively - but slightly more vaguely - we could have just argued that 'everything' - except the sign of $(\xi')^2$ - 'remains the same' under the substitution $\xi = -i \bar{\xi}$, (which also means the sign of $\mu$ in the $(\xi')^2$-term in equation \eqref{psi} is the same for $u,v$ and $\bar{u}, \bar{v}$, as may be noted), and hence that lemma \ref{T proof} remains valid up to this sign under this transformation. This observation is very useful for the transition between the exponential solutions and the solutions with oscillatory behaviour.
\end{remark}

\begin{lemma}\label{integral}
Let $\rho > 0$ be a small but finite parameter. With notation as in section \ref{system}, we have
\begin{equation}\label{int}
\lim_{\delta \rightarrow 0_+} \left| \int_\delta^{\delta+\rho } \frac{G(\delta, s)}{\xi^2(s)} ds \right| < + \infty.
\end{equation}

\end{lemma}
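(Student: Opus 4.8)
The plan is to reduce everything to the behaviour near $x=0$: extract the leading power-law behaviour of $u$, $v$ and $\xi$ there from the small-argument asymptotics of the modified Bessel functions, bound $G(\delta,s)$ crudely by the triangle inequality, and observe that the resulting $\delta$-prefactors exactly absorb the divergences of the partial integrals. Concretely, since (in the exponential case) $\mu\in\mathcal{C}^2([0,L])$ and $\mu(0)<0$, the function $\sqrt{-\mu}$ is $\mathcal{C}^1$ near $0$ with $\sqrt{-\mu(0)}>0$, so $\xi(s)=s\sqrt{-\mu(0)}+O(s^2)$ and $\xi'(s)=\sqrt{-\mu(0)}+O(s)$ as $s\to0_+$; hence there are constants $0<c\le C$ and $s_0>0$ with $cs\le\xi(s)\le Cs$, $c\le\xi'(s)\le C$, and (enlarging $C$) $\sqrt{\xi(s)/\xi'(s)}\le C\sqrt{s}$ on $(0,s_0]$. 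Combining this with the classical expansions $I_{1/3}(z)=\tfrac{(z/2)^{1/3}}{\Gamma(4/3)}(1+O(z^2))$ and $K_{1/3}(z)=\tfrac12\Gamma(1/3)(z/2)^{-1/3}(1+O(z^{2/3}))$ as $z\to0_+$ (the latter via $K_{1/3}=\tfrac{\pi}{2\sin(\pi/3)}(I_{-1/3}-I_{1/3})$; cf. §9.6 of \cite{Abramowitz}), there exist $M,s_0>0$ with
$$|u(s)|\le M\,s^{1/6},\qquad |v(s)|\le M\,s^{5/6},\qquad s\in(0,s_0].$$

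Since $\rho$ may be taken small, assume $\rho\le s_0/2$; then for $0<\delta\le s_0/2$ the interval $[\delta,\delta+\rho]$ lies in $(0,s_0]$, and
$$|G(\delta,s)|\le|u(\delta)||v(s)|+|v(\delta)||u(s)|\le M^2\big(\delta^{1/6}s^{5/6}+\delta^{5/6}s^{1/6}\big),$$
while $\xi^2(s)\ge c^2s^2$. Therefore
$$\left|\int_\delta^{\delta+\rho}\frac{G(\delta,s)}{\xi^2(s)}\,ds\right|\le\frac{M^2}{c^2}\left(\delta^{1/6}\int_\delta^{\delta+\rho}s^{-7/6}\,ds+\delta^{5/6}\int_\delta^{\delta+\rho}s^{-11/6}\,ds\right).$$
The $s$-integrals are elementary: $\int_\delta^{\delta+\rho}s^{-7/6}\,ds=6\big(\delta^{-1/6}-(\delta+\rho)^{-1/6}\big)\le6\,\delta^{-1/6}$ and $\int_\delta^{\delta+\rho}s^{-11/6}\,ds=\tfrac65\big(\delta^{-5/6}-(\delta+\rho)^{-5/6}\big)\le\tfrac65\,\delta^{-5/6}$, so the right-hand side is bounded by $\tfrac{M^2}{c^2}\big(6+\tfrac65\big)$, uniformly in $\delta$. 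Passing to the limit $\delta\to0_+$ yields \eqref{int}.

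The oscillatory case $\mu>0$ and the general complex case go through verbatim: the only input used is that $|I_{1/3}(z)|$ and $|K_{1/3}(z)|$ are dominated near the origin by constants times $|z|^{1/3}$ and $|z|^{-1/3}$, and this persists for the imaginary and complex arguments arising there by the relations \eqref{I alpha} together with $J_\alpha(z)\sim z^\alpha$ as $z\to0$ (equivalently, by replacing $\xi$ with $|\xi|$ as in Theorem \ref{theorem complex} and using $\Re\xi>0$). Consequently $|u(s)|\le M|s|^{1/6}$ and $|v(s)|\le M|s|^{5/6}$ still hold, and the estimate above is unchanged.

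The only point requiring care is that the two contributions, taken separately, produce the non-integrable powers $s^{-7/6}$ and $s^{-11/6}$; the statement is rescued precisely because the prefactors $\delta^{1/6}$, $\delta^{5/6}$ cancel the divergences $\delta^{-1/6}$, $\delta^{-5/6}$ of the corresponding partial integrals — so one must not drop the $\delta$-dependence of $G(\delta,\cdot)$ prematurely. The other thing to get right is the bookkeeping of exponents in the Bessel asymptotics: $K_{1/3}\sim z^{-1/3}$ and $I_{1/3}\sim z^{1/3}$ give the powers $\tfrac12-\tfrac13=\tfrac16$ and $\tfrac12+\tfrac13=\tfrac56$ for $u$ and $v$, and since neither equals $\tfrac12$ (so no logarithms appear and the integrals behave as powers), the cancellation closes exactly.
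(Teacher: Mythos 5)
Your proof is correct and follows essentially the same route as the paper's: extract the small-argument power laws $u(s)\sim s^{1/6}$, $v(s)\sim s^{5/6}$, $\xi(s)\sim s$ from the Bessel asymptotics, bound $G(\delta,s)$ termwise, and observe that the prefactors $\delta^{1/6}$, $\delta^{5/6}$ exactly absorb the $\delta^{-1/6}$, $\delta^{-5/6}$ divergences of the partial integrals. The only difference is that you make the paper's final step ``$\sim c\,\delta^{0}$'' fully explicit with two-sided constants and evaluated antiderivatives, which is a welcome tightening rather than a new idea.
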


\begin{remark*}
We only wish to study the effect of the lower bound in the integral approaching $0$, neglecting any possible effects of $\varepsilon$, as these were discussed in section \ref{system}. Therefore, we will consider $\varepsilon$ to have a fixed value throughout this proof. 
\end{remark*}

\begin{proof}
 Since the integrand in $\xi(s) = \int_0^s \sqrt{-\mu(t)} dt$ is strictly positive by assumption, $\xi$ has a nonzero continuous derivative in $s=0$, and hence, asymptotically satisfies $\xi(s) \sim s, s \rightarrow 0_+$. The asymptotic relations for $K_\alpha, I_\alpha$ can be found in \cite{Abramowitz} or obtained by the method of dominant balance from the modified Bessel equation. Either way, we have $K_\alpha(z) \sim z^{-\alpha}, I_\alpha(z) \sim z^{\alpha}, z \rightarrow 0.$ For $u$ and $v$, this translates into
$$u(t) \sim \left( \xi(t) \right)^{1/6} \sim t^{1/6}, t \rightarrow 0,$$
$$v(t) \sim \left( \xi(t) \right)^{5/6} \sim t^{5/6}, t \rightarrow 0.$$
As integration preserves the relation $\sim$, it follows immediately that for the integral in \eqref{int}, to lowest order in $\delta$, we have
$$\int_\delta^{\delta + \rho} \frac{u(\delta)v(s) - v(\delta) u(s)}{\xi^2(s)} ds \sim c_1 \delta^{1/6} \int_\delta^{\delta + \rho} \frac{s^{5/6}}{s^2}ds - c_2 \delta^{5/6} \int_\delta^{\delta + \rho} \frac{s^{1/6}}{s^2}ds \sim c \delta^0, \delta \rightarrow 0_+.$$
\end{proof}

\section{An introduction to reaction-diffusion equations and Turing instability}\label{App Turing}
As our main motivation for studying the WKBJ method was reaction-diffusion (RD) equations, let us here introduce the basic notions and summarize the basic properties of these models. In general, a RD equation is a partial differential equation of the form
\begin{equation} \label{Turing}
	\frac{\partial \mathbf{c}}{\partial t} = \bar{\mathbb{D}} \Delta \mathbf{c} +  \mathbf{f}(\mathbf{c}) ,
\end{equation}	 
where $\mathbf{c}: \Omega \times \mathbb{R} \rightarrow \mathbb{R}_+^n$, with $\Omega \subset \mathbb{R}^m$, is the concentration vector of the underlying $n$ chemical substances\footnote{In other contexts $\mathbf{c}$ can also represent other quantities, e.g. concentrations of animal species (\i.e. a continuous model of occurrence) in ecological problems.}, $\mathbf{f}:\mathbb{R}^n \times \Omega \rightarrow \mathbb{R}^n$ describes their reaction kinetics, $\Delta$ is the Laplace operator representing diffusion (applied componentwise), and $\bar{\mathbb{D}}$ is the diagonal $n \times n$ matrix of (positive) diffusion coefficients. Note that here, for simplicity, we assumed spatially independent diffusion coefficients as otherwise the diffusion term would have to be replaced by $\nabla \cdot \left( \bar{\mathbb{D}} \nabla \mathbf{c} \right)$, where $\nabla \mathbf{c}$ is a tensor and both $\nabla$ operators are applied accordingly. \footnote{Both $\nabla \mathbf{c}$ and $\bar{\mathbb{D}} \nabla \mathbf{c}$ can be viewed as $n \times m$ matrices. The divergence operator then acts on each line of this matrix separately.} We also assumed homogeneity by omitting an explicit spatial dependence of $\mathbf{f}$; adding an explicit $x$-dependence would mean heterogeneity. Although it turns out that most of the following analysis gives analogous results for these two different cases, we will distinguish between them consistently. For simplicity, our analysis will mostly be aimed at the homogeneous case. The generalizations to heterogeneity are a recent result by Krause et al.\cite{Klika}, who employed asymptotic methods and the WKBJ approximation in their analysis, thus co-motivating our attempt of a formal generalization of this approach to multidimensional systems and RD equations. Let us add that in the context of embryogenesis or - more generally - \textit{spontaneous} pattern formation we usually equip equation \eqref{Turing} with the Neumann (zero-flux) boundary conditions
\begin{equation*}
(\vec{n}\cdot\nabla) \mathbf{c}(\vec{r},t) = 0 \text{ for } \vec{r} \in \partial \Omega,
\end{equation*} 
where $\vec{n}$ represents the outward normal to the boundary $\partial \Omega$. 

A non-dimensionalization procedure with respect to a characteristic reaction time scale $T$, spatial scale $L$ and concentration scale $C$ is usually applied to equation \eqref{Turing} to recover
\begin{equation*} \label{Turing nonD}
\frac{\partial \textbf{c}}{\partial t} \equiv \textbf{c}_t = \varepsilon^2 \mathbb{D} \Delta \mathbf{c} + \mathbf{f}(\mathbf{c}),
\end{equation*}
where we reused $\mathbf{c}, \mathbf{f}$ and $t$ to now represent non-dimensional quantities.  Here we preserved the parameter $\varepsilon^2 = \frac{D_1 T}{L^2}$ explicitly to (shortly) obtain exactly the form of equation that is considered in section \ref{WKBepsilon} asymptotically in the limit $\varepsilon \rightarrow 0$. $D_1$ is usually chosen to be the greatest of the diffusion coefficients, that is the greatest element of the diagonal matrix $\bar{\mathbb{D}}$. We repeat here the observation made in that section, namely that for the morphogen pair nodal and lefty, the parameter value is of order $\varepsilon^2 \sim 10^{-4}$.\cite{nodal} In section \ref{sect_WKBJ RD}, on the other hand, we consider the limit of large growth rates and show that it very much corresponds to the former limit. Hence, there we do not represent $\varepsilon$ explicitly in the equation and just use the form
$$ \textbf{c}_t = \mathbb{D} \Delta \mathbf{c} + \mathbf{f}(\mathbf{c}),$$
which is the form usually used for a general (\i.e. not strictly asymptotic) analysis of reaction-diffusion equations. Either way, for a two-component RD system, the (non-dimensionalized) matrix $\mathbb{D}$ has the form
$$\mathbb{D} = \begin{pmatrix}
1 & 0 \\
0 & d 
\end{pmatrix},$$
with $d = D_2/D_1 < 1$ is the ratio of the diffusion coefficients. We will see shortly that $d=1$ is not permissible if the system is to display Turing instability.

We then proceed by assuming the existence of a homogeneous steady state $\mathbf{c}_*$ satisfying 
\begin{equation*}\label{equil}
\mathbf{f}(\mathbf{c}_*) = 0.
\end{equation*}
Since we wish to apply linear stability analysis, we need to linearize equation \eqref{Turing} around the fixed point $\mathbf{c}_*$. This procedure, introducing the perturbation $\mathbf{w} \equiv \mathbf{c - c}_*$, yields
\begin{equation*}
\textbf{w}_t = \mathbb{D} \Delta \textbf{w} + D \mathbf{f}(\mathbf{w=0})\mathbf{w} + \mathcal{O} \left( \Vert \mathbf{w} \Vert^2 \right).
\end{equation*}
Denoting the Jacobian matrix of the kinetics $D \mathbf{f}(\mathbf{0}) \equiv \mathbb{A}, $ we can see that, upon neglecting the $\mathcal{O} \left( \Vert \mathbf{w} \Vert^2 \right)$-terms, the linear stability of the solution to the linear equation 
\begin{equation}\label{RD lin App}
\textbf{w}_t = \mathbb{D} \Delta \textbf{w} + \mathbb{A} \mathbf{w}
\end{equation}
will depend on the spectrum of the Laplace operator as well as that of $\mathbb{A}$. For $\Omega$ bounded with sufficiently smooth boundary, the Laplace operator will have a purely discrete spectrum in $L^2(\Omega)$, hence allowing us to expand any (possible) solution using its eigenvectors $w_k$ given by
\begin{equation*}
	\begin{aligned}
	\Delta {w}_k & = - k^2 {w}_k, \\
	 \left( \vec{n} \cdot \nabla \right) {w}_k & = 0 \text{ on } \partial \Omega,
	\end{aligned}
\end{equation*}
and discuss the stability of the individual modes. Exploiting linearity by using separation of variables, we obviously obtain exponential dynamics, and the solution to \eqref{RD lin App} will hence be given as
\begin{equation*}
\mathbf{w}(\vec{r}, t) = \sum_{k=0}^\infty \mathbf{v}_k e^{\lambda_kt} w_k(\vec{r})
\end{equation*}
for some vectors $\mathbf{v_k} \in \mathbb{R}^n$ determined by the initial conditions. Its linear stability clearly depends on the signs of (the real part of) the eigenvalue $\lambda_k$. 

For the heterogeneous case (\i.e. for spatially dependent kinetics $\mathbf{f}(\mathbf{c},x)$), we generalize the notion of a homogeneous equilibrium by assuming a \textit{slowly varying} equilibrium $\mathbf{c}_*(x)$, only admitting spatial derivatives of scale $\mathcal{O}(1)$ or smaller (\i.e. excluding derivatives of scale $\mathcal{O}(1/\varepsilon)$ or larger). From linearization about this equilibrium, \i.e. for $w(t,x) = c(t,x) - c_*(x)$, we then obtain
$$\mathbf{w}_t = \varepsilon^2 \mathbb{D} \Delta \mathbf{w} +  \mathbb{J}(x)\mathbf{w},$$
with $\mathbb{J}(x)$ denoting the Jacobian matrix of the map $\mathbf{f}$ evaluated at the equilibrium $\mathbf{c}_*(x)$. Note the analogy between the assumptions on the derivatives of $\mathbf{c}_+(x)$ and the assumptions on the derivatives of the eigenvalues $\mu(x)$ used in sections \ref{WKBepsilon} and \ref{sect_WKBJ RD} as well as the property proved in lemma \ref{unity}. By linearity, the dynamics of this system will again be exponential. However, due to the explicit spatial dependence of the stability matrix $\mathbb{J}(x)$ (and in line with the form of first-order approximation \eqref{1st-order WKBJ}), we allow here for a $\lambda$-dependence of the spatial modes and apply the ansatz $\mathbf{w}(t,x, \lambda) \sim e^{\lambda t} \mathbf{y}(x, \lambda)$. Note that in the homogeneous case, there was no $\lambda$-dependence in the spatial modes $\mathbf{y}_k = \mathbf{v}_k w_k(x)$.

\subsection{Diffusion-driven instability}
Let us now present the idea of \textit{diffusion-driven} (or \textit{Turing}) \textit{instability}, hereinafter sometimes shortened to DDI. As the term reveals, Turing's idea was that diffusion could become the driving force of pattern formation in that it could cause instability of a homogeneous steady state if an inhomogeneous perturbation kicked in. For this to occur (or at least be admissible), several requirements must be fulfilled. Firstly, \textit{stability} of the homogeneous steady state $c_*$ is required. A \textit{homogeneous} state eliminates any effect of diffusion so that the linearized reaction-diffusion equation \eqref{RD lin App} gives $\textbf{w}_t = \mathbb{A} \mathbf{w}$ for the stability (Jacobian) matrix $\mathbb{A} = D\mathbf{f}(\mathbf{c}_*).$ For a two-component RD equation, the eigenvalues of $\mathbb{A}$ are given by
$$\lambda^\pm = \frac{1}{2} \left[ \tr \mathbb{A} \pm \sqrt{(\tr \mathbb{A})^2 - 4\det \mathbb{A} }  \right],$$
so requiring their (real parts') negativeness, and hence stability, is equivalent to requiring both
\begin{equation} \label{stab cond}
\tr \mathbb{A} < 0 \text{ and } \det \mathbb{A}  > 0.
\end{equation}

Assuming that the steady state is indeed stable, we wish to see how diffusion can be the cause of instability and result in the onset of pattern formation. Let us therefore consider an inhomogeneous perturbation of the equilibrium $\mathbf{c}_*$. Then diffusion becomes a factor and we need to investigate the stability of the solutions to the linearized equations. The corresponding eigenvalues are now given by the characteristic equation 
\begin{equation} \label{dif instab}
\det \left( \lambda \mathbb{I} -  \mathbb{A} + k^2 \mathbb{D} \right) = \det \left( \lambda \mathbb{I} - \mathbb{D} \left( \mathbb{D}^{-1} \mathbb{A} - k^2 \mathbb{I} \right) \right) = 0.
\end{equation}
For diffusion-driven instability to occur, it is necessary that for some $k$ we have $\Re \lambda \left( k^2 \right) > 0$. An immediate observation is that the substances must not diffuse at equal rates; if $\mathbb{D}$ was a (positive) multiple of the identity matrix $\mathbb{D} = d \mathbb{I}$, we would just be looking for eigenvalues of $\mathbb{A}$ of the form $\tilde{\lambda} = \lambda + d k^2$. Since all eigenvalues $\tilde{\lambda}$ of $\mathbb{A}$ have negative real parts, so would do $\lambda$ and we would have stability. Hence, the substances (e.g. chemicals or species) need to diffuse at different rates. For further intuition we again study the case of a two-component system. Denoting $\mathbb{B}_k = \mathbb{D} \left( \mathbb{D}^{-1} \mathbb{A} - k^2 \mathbb{I} \right)$, we now obtain the \textit{dispersion relation}, \i.e. the relation between growth rates (and thus frequencies) and wavenumbers (and thus wavelengths), as
$$\lambda_k^\pm = \frac{1}{2} \left[ \tr \mathbb{B}_k \pm \sqrt{(\tr \mathbb{B}_k)^2 - 4\det \mathbb{B}_k }  \right].$$
First, let us note that $\tr \mathbb{B}_k = \tr \mathbb{A} - k^2(1+d) < 0$ by conditions \eqref{stab cond}. Hence, if the growth rate $\lambda_k$ (or its real part) is to be positive for any $k$, we must have  necessarily have $\det \mathbb{B}_k < 0$ so that the absolute value of the square root prevails over the negative trace of $\mathbb{B}_k$. Expressing $\det \mathbb{B}_k$ as a polynomial in $k^2$ whose roots are given by the eigenvalues of $\mathbb{D}^{-1}\mathbb{A}$, we obtain
$$\det \mathbb{B}_k = d \det(\mathbb{D}^{-1}\mathbb{A} - k^2 \mathbb{I}) = d \left[ (k^2)^2 - \tr(\mathbb{D}^{-1}\mathbb{A}) k^2 + \det(\mathbb{D}^{-1}\mathbb{A})  \right] \equiv h(k^2).$$
From $\det(\mathbb{D}^{-1}\mathbb{A}) = \frac{1}{d} \det \mathbb{A}$, the third term is positive by \eqref{stab cond}. If $h(k^2)$ is to be negative anywhere, the second term must be negative, yielding the condition $\tr(\mathbb{D}^{-1}\mathbb{A})>0$. Furthermore, as $h$ is quadratic in $k^2$ with a positive coefficient of the 'quadratic' term $(k^2)^2$, the necessary and sufficient condition for $h$ to attain negativity for some $k$ is that the vertex of the corresponding parabola be negative. This vertex is given by the value of $h$ at $k^2_{min} = \frac{1}{2}\tr(\mathbb{D}^{-1}\mathbb{A}).$ Inserting this value back into $h$, we have
$$h_{min} = h(k^2_{min}) = \det(\mathbb{D}^{-1}\mathbb{A}) - \frac{1}{4} \left( \tr(\mathbb{D}^{-1}\mathbb{A}) \right)^2.$$
The negativity of the vertex $h_{min}$ is hence equivalent to requiring $\left( \tr(\mathbb{D}^{-1}\mathbb{A}) \right)^2 - 4 \det(\mathbb{D}^{-1}\mathbb{A}) > 0 $. For diffusion-driven instability, we thus have the conditions
\begin{equation*}\label{cond all}
\tr  \mathbb{A} < 0, \qquad \det \mathbb{A} >0, \qquad \tr(\mathbb{D}^{-1}\mathbb{A}) > 0, \qquad \left( \tr(\mathbb{D}^{-1}\mathbb{A}) \right)^2 - 4 \det(\mathbb{D}^{-1}\mathbb{A})>0.
\end{equation*}
Let us note that for the heterogeneous case, Krause et al. arrive at exactly these four conditions for DDI with $\mathbb{A}$ replaced by the heterogeneous stability matrix $\mathbb{J}(x)$.\cite{Klika}

We conclude this brief introduction to RD equations and Turing instability by observing that the range of the unstable eigenvalues can be specified precisely for the case of a two-component system: it is clearly given by all the wavenumbers $k^2$ lying between the two roots of the equation $h(k^2) = 0.$ From that, we can specify the unstable modes to be given by precisely those eigenmodes $w_k$ whose wavenumbers satisfy
\begin{equation*} \label{eigenvalues}
	\begin{gathered}
k_{min}^2 \equiv \frac{1}{2} \left[ \tr(\mathbb{D}^{-1}\mathbb{A}) - \sqrt{(\tr(\mathbb{D}^{-1}\mathbb{A}))^2 - 4\det(\mathbb{D}^{-1}\mathbb{A})}\right]  < k^2 \\
< \frac{1}{2} \left[ \tr(\mathbb{D}^{-1}\mathbb{A}) + \sqrt{(\tr(\mathbb{D}^{-1}\mathbb{A}))^2 - 4\det(\mathbb{D}^{-1}\mathbb{A}) }\right] \equiv k_{max}^2.
	\end{gathered}
\end{equation*}

\bibliographystyle{plain}
\bibliography{RD_WKBJ}

\begin{thebibliography}{10}

\bibitem{Abramowitz}
M.~Abramowitz and I.~A. Stegun.
\newblock {\em Handbook of Mathematical Functions: With Formulas, Graphs, and
  Mathematical Tables}, chapter 9.6,10.4.
\newblock Applied mathematics series. Dover Publications, 1964.

\bibitem{PDEs}
R.~E. Baker, E.~A. Gaffney, and P.~K. Maini.
\newblock Partial differential equations for self-organization in cellular and
  developmental biology.
\newblock {\em Nonlinearity}, 21:R251, 10 2008.

\bibitem{Bell1968}
W.~W. Bell.
\newblock {\em Special functions for scientists and engineers}.
\newblock Van Nostrand, 1968.

\bibitem{Bender}
C.~M. Bender and S.~A. Orszag.
\newblock {\em Advanced Mathematical Methods for Scientists and Engineers I}.
\newblock Springer-Verlag New York, 1999.

\bibitem{Cross+H}
M.~C. Cross and P.~C. Hohenberg.
\newblock Pattern formation outside of equilibrium.
\newblock {\em Rev. Mod. Phys.}, 65:851--1112, Jul 1993.

\bibitem{DEWEL}
G.~Dewel and P.~Borckmans.
\newblock Effects of slow spatial variations on dissipative structures.
\newblock {\em Physics Letters A}, 138(4):189--192, 1989.

\bibitem{keller1960asymptotic}
J.~B. Keller and S.~I. Rubinow.
\newblock Asymptotic solution of eigenvalue problems.
\newblock {\em Annals of Physics}, 9(1):24--75, 1960.

\bibitem{non-normality}
V.~Klika.
\newblock Significance of non-normality-induced patterns: Transient growth
  versus asymptotic stability.
\newblock {\em Chaos: An Interdisciplinary Journal of Nonlinear Science},
  27:073120, 07 2017.

\bibitem{advection}
V.~Klika, M.~Kozák, and E.~A. Gaffney.
\newblock Domain size driven instability: Self-organization in systems with
  advection.
\newblock {\em SIAM Journal on Applied Mathematics}, 78(5):2298--2322, 2018.

\bibitem{kozak2019pattern}
M.~Koz{\'a}k, E.~A. Gaffney, and V.~Klika.
\newblock Pattern formation in reaction-diffusion systems with piecewise
  kinetic modulation: An example study of heterogeneous kinetics.
\newblock {\em Physical Review E}, 100(4):042220, 2019.

\bibitem{Krause-growth}
A.~L. Krause, M.~Ellis, and R.~A. Van~Gorder.
\newblock Influence of curvature, growth, and anisotropy on the evolution of
  turing patterns on growing manifolds.
\newblock {\em Bulletin of Mathematical Biology}, 81, 12 2018.

\bibitem{krause2018heterogeneity}
A.~L. Krause, V.~Klika, T.~E. Woolley, and E.~A. Gaffney.
\newblock Heterogeneity induces spatiotemporal oscillations in
  reaction-diffusion systems.
\newblock {\em Physical Review E}, 97(5):052206, 2018.

\bibitem{Klika}
A.~L. Krause, V.~Klika, T.~E. Woolley, and E.~A. Gaffney.
\newblock From one pattern into another: analysis of turing patterns in
  heterogeneous domains via wkbj.
\newblock {\em Journal of The Royal Society Interface}, 17:20190621, 01 2020.

\bibitem{madzvamuse2010stability}
A.~Madzvamuse, E.~A. Gaffney, and P.~K. Maini.
\newblock Stability analysis of non-autonomous reaction-diffusion systems: the
  effects of growing domains.
\newblock {\em Journal of mathematical biology}, 61(1):133--164, 2010.

\bibitem{Maini1397}
P.~K. Maini, R.~E. Baker, and C.-M. Chuong.
\newblock The turing model comes of molecular age.
\newblock {\em Science}, 314(5804):1397--1398, 2006.

\bibitem{Murray03}
J.~D. Murray.
\newblock {\em Mathematical Biology II: Spatial Models and Biomedical
  Applications}.
\newblock Springer, 3. edition, 2003.

\bibitem{Page}
K.~Page, P.~K. Maini, and N.~A.~M. Monk.
\newblock Pattern formation in spatially heterogeneous turing
  reaction–diffusion models.
\newblock {\em Physica D: Nonlinear Phenomena}, 181:80--101, 2003.

\bibitem{nodal}
R.~Sekine, T.~Shibata, and M.~Ebisuya.
\newblock Synthetic mammalian pattern formation driven by differential
  diffusivity of nodal and lefty.
\newblock {\em Nature Communications}, 17, 12 2018.

\bibitem{Wollkind}
L.~E. Stephenson and D.~J. Wollkind.
\newblock Weakly nonlinear stability analyses of one-dimensional turing pattern
  formation in activator-inhibitor/immobilizer model systems.
\newblock {\em Journal of Mathematical Biology}, 33:771--815, 1995.

\bibitem{Turing}
A.~M. Turing.
\newblock The chemical basis of morphogenesis.
\newblock {\em Phil. Trans. R. Soc. Lond. B 1952 237, 37-72}, 1952.

\bibitem{van2021turing}
R.~A. Van~Gorder, V.~Klika, and A.~L. Krause.
\newblock Turing conditions for pattern forming systems on evolving manifolds.
\newblock {\em Journal of Mathematical Biology}, 82(1):1--61, 2021.

\bibitem{Waddington}
C.~H. Waddington.
\newblock {\em Principles of embryology}.
\newblock Allen \& Unwin Lond, 1956.

\bibitem{ward2002dynamics}
M.~J. Ward, D.~McInerney, P.~Houston, D.~Gavaghan, and P.~K. Maini.
\newblock The dynamics and pinning of a spike for a reaction-diffusion system.
\newblock {\em SIAM Journal on Applied Mathematics}, 62(4):1297--1328, 2002.

\bibitem{white2010}
R.~B. White.
\newblock {\em Asymptotic Analysis of Differential Equations}.
\newblock Imperial College Press, 2010.

\end{thebibliography}

\end{document}